\newtheorem{theorem}{Theorem}
\newtheorem{remark}[theorem]{Remark}
\newtheorem{lemma}[theorem]{Lemma}
\newtheorem{corollary}[theorem]{Corollary}
\newtheorem{example}[theorem]{Example}
\newtheorem{definition}[theorem]{Definition}
\def\thmt@innercounters{equation,algocf}
\newcommand{\Psat}[1][P]{$#1$-\textsc{Satisfiability}}
\newcommand{\PsatT}[1][P]{$#1$-Satisfiability} 
\newcommand{\Pmksp}[1][P]{$#1$-\textsc{Makespan}}
\newcommand{\PmkspT}[1][P]{$#1$-Makespan} 
\newcommand{\supp}{\mathrm{supp}}
\newcommand{\ww}{\mathbf{w}} 
\newcommand{\ba}{\mathbf{a}}
\newcommand{\bs}{\mathbf{s}} 
\newcommand{\outpt}[1]{\textsf{#1}} 
\newmdenv[
  skipabove=\baselineskip,
  skipbelow=\baselineskip,
  linewidth=0.6pt,
  roundcorner=6pt,
  linecolor=black!40,
  backgroundcolor=black!2,
  innerleftmargin=8pt,
  innerrightmargin=8pt,
  innertopmargin=6pt,
  innerbottommargin=6pt
]{problemBox}
\newcommand{\problem}[3]{%
  \begin{problemBox}%
  \noindent\textbf{Problem:} \textsc{#1}\\[2pt]%
  \textbf{Input:} #2\\[2pt]%
  \textbf{Task:} #3%
  \end{problemBox}%
}
\title{Fair Coordination in Strategic Scheduling}
\author{Wei-Chen Lee}
\affiliation{
  \institution{University of Oxford}
  \city{Oxford}
  \country{United Kingdom}}
\email{wei-chen.lee@cs.ox.ac.uk}
\author{Martin Bullinger}
\affiliation{
  \institution{University of Bristol}
  \city{Bristol}
  \country{United Kingdom}}
\email{martin.bullinger@bristol.ac.uk}
\author{Alessandro Abate}
\affiliation{
  \institution{University of Oxford}
  \city{Oxford}
  \country{United Kingdom}}
\email{alessandro.abate@cs.ox.ac.uk}
\author{Michael Wooldridge}
\affiliation{
  \institution{University of Oxford}
  \city{Oxford}
  \country{United Kingdom}}
\email{michael.wooldridge@cs.ox.ac.uk}
\begin{abstract}
    We consider a scheduling problem of strategic agents representing jobs of different weights.
    Each agent has to decide on one of a finite set of identical machines to get their job processed. 
    In contrast to the common and exclusive focus on makespan minimization, we want the outcome to be fair under strategic considerations of the agents. 
    Two natural properties are credibility, which ensures that the assignment is a Nash equilibrium and equality, requiring that agents with equal-weight jobs are assigned to machines of equal load.
    We combine these two with a hierarchy of fairness properties based on envy-freeness together with several relaxations based on the idea that envy seems more justified towards agents with a higher weight.
    We present a complete complexity landscape for satisfiability and decision versions of these properties, alone or in combination, and study them as structural constraints under makespan optimization.
    For our positive results, we develop a unified algorithmic approach, where we achieve different properties by fine-tuning key subroutines.
\end{abstract}
\keywords{Non-cooperative game theory, Coordination, Fairness, Scheduling, 
Load balancing
}
\newcommand{\BibTeX}{\rm B\kern-.05em{\sc i\kern-.025em b}\kern-.08em\TeX}
\begin{document}


\pagestyle{fancy}
\fancyhead{}


\maketitle 


\section{Introduction} \label{sec:intro}


We study a strategic variant of a classic problem known as the identical machine scheduling problem \cite{Graham69Schedulling}, where each of the $n$ weighted tasks is assigned to one of the $m$ identical resources (machines). The load on a resource is the sum of the weights of the tasks assigned to it, and the objective of the assignment is to minimize the largest load on any resource, known as the `makespan' of the assignment.\footnote{The interpretation is to consider the weight of a task as the time taken to process it, and thus the makespan is the time required to process all tasks.} 
Strategic variants of this problem, where each task is owned by a different player concerned only with the completion time of their task, or the completion time of the resource to which their task is assigned, are known as the selfish load balancing problem or the weighted singleton congestion game.\footnote{Different definitions of the cost incurred by a player have been studied, with some defined as the time taken until the completion of their task, while others as the time taken to complete all tasks assigned to the same resource. The latter definition is closer to the notion of congestion and is the one we use in this paper.}

In a strategic variant where players can coordinate their choices, fairness is an important consideration among the players.
A natural idea in this context is envy-freeness \cite{Fole67a,Vari74a}, a standard notion in fair division \cite{BrTa96a}, which requires that no agent would rather replace another agent on their assigned resource.
However, naive envy-freeness is often impossible to satisfy,\footnote{An example is the instance discussed in \Cref{fig:assignment_examples}.}
and even feels unnecessarily rigid as agents congest machines differently.
This motivates the relaxation of envy-freeness where envy is only justified towards agents representing a heavier job.
We introduce a hierarchy of such notions and study them along two other natural desiderata: credibility, a Nash equilibrium condition that demands that no player can benefit from a unilateral deviation, and equality, another fairness notion that requires agents of equal weight to be assigned to machines of equal load.
We then consider the algorithmic challenge of computing desirable assignments that satisfy single or several of these properties.

Notably, our properties can be seen as natural constraints for the identical machine scheduling problem. 
Interestingly, (the combination of) several of them render this problem polynomial-time solvable.
We view it as an important message from our paper that natural constraints can facilitate computational feasibility.

\subsection{Related Literature}

Early works in the scheduling literature establish that the base problem of identical machine scheduling (without further constraints)
is NP-complete \cite{Garey1979IMS_NP-complete}, and that a polynomial-time approximation of $4/3 - 1/(3m)$ can be achieved using the \emph{Longest Processing Time First (LPT)} algorithm \cite{Graham69Schedulling}. 

Strategic variants of this problem, known as the weighted singleton congestion game \cite{Milchtaich1996WeightedSingletonCongestionGame} or the selfish load balancing problem \cite{Vöcking2007SelfishLoadBalancing}, some equivalent to
our study of credibility, 
were studied later.
These build on important properties of congestion games, such as the finite improvement property, existence of pure-strategy Nash equilibria, or the correspondence with potential games established in \cite{Rosenthal1973CongestionGames, Monderer1996PotentialGames}. 
The LPT algorithm was found to provide a pure-strategy Nash equilibrium 
\cite{Fotakis2002LPT_NE}. 
Subsequent work on the price of anarchy of atomic congestion games (e.g., \cite{Koutsoupias2009WorstCaseEquilibria,Christodoulou05PoACongestionGames,Christodoulou2005CorrelatedEquilibriaPoA,Awerbuch2005PoA,Aland2011PoA,Bhawalkar2014PoA}) studied the efficiency gap between the worst equilibrium and the efficient outcome. 

This paper is most closely related to recent works on communication partition \cite{Lee2025CommunicationPartitions, Lee2025PartitionEquilibria}, where players are partitioned into coalitions and communicate locally within their coalition. 
They play a correlated action profile (assignment) if a credible, efficient, and fair agreement can be reached within each coalition. 
We expand on this work in two important dimensions. 
First we relax the arguably very restrictive envy-freeness property by considering a hierarchy of weaker notions of fairness; second, rather than characterizing the structure of desirable outcomes, 
we study the computational complexity of finding an assignment for a coalition of players under all combinations of such axiomatic properties.

\begin{table}[th!]
    \centering
    \caption{Summary of the satisfiability and complexity for combinations of various fairness and credibility conditions.
    For definitions see \Cref{sec:preliminaries}.
    In the third and fourth column, we either give a polynomial bound on the running time for computing a satisfying assignment with or without a makespan constraint, or state NP-completeness of deciding on existence.\label{tab:summary}}
    \resizebox{1\columnwidth}{!}{
    \begin{tabular}{ccccc}
    \toprule
    Property $P$        & Exists     & Unique    & \Psat  & \Pmksp   \\
    \midrule
    $\top$              & Y             & N             & $O(n)$            & NP-complete \cite{Garey1979IMS_NP-complete} \\            
    $Eq$                & Y             & N             & $O(n)$            & NP-complete    \\
    $Cr$                & Y             & N             & $O(n \log n)$ \cite{Fotakis2002LPT_NE}    & NP-complete    \\
    $Eq \land Cr$       & N             & N             & NP-complete       & NP-complete    \\
    \midrule
    $EF$                & N             & N             & $O(n \sqrt n)$    & $O(n \sqrt n)$ \\
    $EF \land Cr$       & N             & Y             & $O(n \sqrt n)$    & $O(n \sqrt n)$ \\
    \midrule
    $WOE$               & Y             & N             & $O(n)$            & $O(n \log n)$ \\
    $WOE \land Eq \equiv OE$  & Y       & N             & $O(n)$            & $O(n^2)$ \\
    $WOE \land Cr$      & N             & Y             & $O(n \log n)$     & $O(n \log n)$ \\
    $WOE \land Eq \land Cr \equiv OE \land Cr$ & N & Y  & $O(n \log n)$     & $O(n \log n)$ \\
    \midrule
    $SM$                & Y             & N             & $O(n \log n)$     & $O(n \log n)$ \\
    $SM \land Eq$       & Y             & N             & $O(n \log n)$     & $O(n \log n)$ \\
    $SM \land Cr$       & N             & Y             & $O(n \log n)$     & $O(n \log n)$ \\
    $SM \land Eq \land Cr$   & N        & Y             & $O(n \log n)$     & $O(n \log n)$ \\
    \midrule
    $WM$                & Y             & N             & $O(n)$            & NP-complete   \\
    $WM \land Eq \equiv M$       & Y    & N             & $O(n)$            & NP-complete   \\
    $WM \land Cr$       & N             & N             & NP-complete       & NP-complete   \\
    $WM \land Eq \land Cr \equiv M \land Cr$ & N & N    & NP-complete       & NP-complete   \\
    \bottomrule
    \end{tabular}
    } 
\end{table}

\subsection{Our Contribution}

We answer the following four questions for each (single or composite) property $P$, a summary of which is provided in \Cref{tab:summary}.
\begin{itemize}
    \item Existence: Does a $P$-satisfying assignment always exist? 
    \item Uniqueness: If a $P$-satisfying assignment exists, is it unique?
    \item \Psat: What is the computational complexity of finding a $P$-satisfying assignment?
    \item \Pmksp: What is the computational complexity of finding a $P$-satisfying assignment of smallest makespan?
\end{itemize}

We find that, for certain classes of this decision problem, it is sufficient to search for a `contiguous' assignment which limits the search space and avoids the combinatorial blow-up. However, especially when this is not possible, we provide a reduction to show that they are indeed NP-complete.

For the remainder of this paper, we formally define the problem, our notation, and the various properties of interest in \Cref{sec:preliminaries}. We present tractable results (computable in polynomial time) in \Cref{sec:tractable}, and intractable results (NP-completeness) in \Cref{sec:intractable}. We conclude in \Cref{sec:conclusion}. 
To keep the paper concise, formal proofs are deferred to \Cref{sec:deferred_proofs}. 
Our main algorithm and all its subroutines are stated (or restated) in \Cref{sec:algorithm}. Results concerning envy-freeness are presented in \Cref{sec:EF}. 


\section{Preliminaries} \label{sec:preliminaries}


\subsection{Notation}

For positive integers $k$ and $\ell$, we denote $[k] :=\{1,\dots,k\}$ and $[k:\ell] := \{k,\dots, \ell\}$.
Given a vector $\textbf{v}$, we denote by $|\textbf{v}|$ its length.
We have a set $[n]$ of $n$ weighted players, and a set $[m]$ of $m$ identical resources. 
The weights of players are represented by a vector $\ww = (w_i)_{i\in [n]} \in \mathbb{Z}_{>0}^n$ of positive integers.
We denote by $\supp(\ww)$ 
the set of unique weights in $\ww$, and for $w \in \supp(\ww)$, we set $\#(w) := |\{i\in [n] \colon w_i = w\}|$ the number of players of weight $w$ with respect to $\ww$. An assignment is an n-tuple $\ba = (a_1, \dots, a_n)$ where $a_i \in [m]$ specifies the resource that is assigned to player~$i$. 

We can also express $\ba$ in terms of the \emph{load distribution} on each resource. 
For example, consider the instance $\ww = (4, 3, 3, 1)$, $m = 3$, and assignment $\ba = (1,1,2,2)$, leaving the third resource unused. 
The load distribution can be written as $l(\ba) = \{\{4, 3\} , \{3, 1\} ,\varnothing \}$, a multiset whose components are multisets. 
This representation groups equivalent assignments, only differing between players of same weight, into the same load expression. 
Two assignment are equivalent if they have the same load distribution, i.e., $\ba \equiv \ba' \iff l(\ba) = l(\ba')$.

Given an assignment $\ba$, the load on some resource $x \in [m]$ is the sum of all the players' weights assigned to it, denoted as $v_x(\ba) := \sum_{i:a_i = x} w_i$, and is the cost incurred by each player assigned to $x$.\footnote{Equivalently, this is a weighted singleton congestion game with a linear cost function.} The makespan of the assignment is the maximum load on any machine, i.e. $c(\ba) := \max_{x \in [m]} v_x(a)$. For brevity and consistency, we use the variables $i, j, k \in [n]$ for player indices or counts, and the variables $x, y, z \in [m]$ for resource indices or counts, throughout this paper. Additionally, $W := \sum_i w_i$ denotes the total weight of all players, $\ww[i:j] := (w_i, ..., w_j)$ denotes a contiguous list of sorted weights, and $W[i:j] := \sum_{k=i}^j w_k$ denotes the sum of weights of such a contiguous set of players.

\subsection{Problem Definition}

The problem we study can be formulated in a number of equivalent ways. We define it as a constrained identical machine scheduling problem to emphasize its connection with the classical identical machine scheduling problem, particularly on the computational complexity of finding an optimal assignment.



Given a property $P$ (this can be a conjunction of properties), our main problem is to minimize makespan subject to satisfying this property.

\problem{\PmkspT}{
A weight vector $\ww = (w_1, ..., w_n)$ of positive integers, a positive integer $m$, makespan threshold $t$.
}{
Find a $P$-satisfying assignment of makespan at most $t$, or decide that no such assignment exists.
}

We also investigate the problem of satisfying $P$ in isolation. 
It can be formulated as a special case of \Pmksp{} with $t = \infty$. 

\problem{\PsatT}{
A weight vector $\ww = (w_1, ..., w_n)$ of positive integers and a positive integer $m$.
}{
Find a $P$-satisfying assignment, or decide that no such assignment exists.
}

Throughout this paper, we assume that $n > m > 1$ (i.e., there are more players than resources, and more than one resource), and that $\max_i w_i \leq t$. 
The case $n \leq m$ is rather trivial: an assignment where each player is assigned a different resource satisfies all of the conditions considered in this paper, so we can answer \Pmksp{} by simply checking whether the largest weight in $\ww$ exceeds the makespan threshold $t$. 
Moreover, the case $m = 1$ allows only for a single assignment (all players assigned to the same resource), which can be checked to solve \Pmksp. 
Finally, the case $\max_i w_i > t$ clearly has no solution. 
  
\subsection{Properties and Their Relations}

The properties that we consider in this paper are defined as follows. 
Single properties can be combined to form composite properties, e.g., $\ba$ satisfies $WOE \land Eq$ means $\ba$ satisfies $WOE$ and $\ba$ satisfies $Eq$.
Recall that $v_{a_i}$ is the load on the resource assigned to player~$i$.

\begin{definition} [Properties]
    Given an assignment $\ba$, we say that $\ba$ satisfies the following property if for all $i, j \in [n], x \in [m]$:

    \textbf{Credibility ($\boldsymbol{Cr}$)}: $v_{a_i} \leq v_x + w_i$.

    \textbf{Equality ($\boldsymbol{Eq}$)}: $w_i = w_j \implies v_{a_i} = v_{a_j}$.    
    
    
    \textbf{Envy-freeness ($\boldsymbol{EF}$)}: $v_{a_i} - w_i \leq v_{a_j} - w_j$.    
    
    \textbf{Weak ordered envy-freeness ($\boldsymbol{WOE}$)}: $w_i < w_j, a_i \neq a_j  \implies v_{a_i} - w_i \leq v_{a_j} - w_j$. 

    \textbf{Ordered envy-freeness ($\boldsymbol{OE}$)}: $w_i \leq w_j, a_i \neq a_j \implies v_{a_i} - w_i \leq v_{a_j} - w_j$.
    
    \textbf{Strong monotonicity ($\boldsymbol{SM}$)}: $w_i < w_j \implies v_{a_i} < v_{a_j}$.
    
    \textbf{Weak monotonicity ($\boldsymbol{WM}$)}: $w_i < w_j \implies v_{a_i} \leq v_{a_j}$.

    \textbf{Monotonicity ($\boldsymbol{M}$)}: $w_i \leq w_j \implies v_{a_i} \leq v_{a_j}$.
\end{definition}

We speak of a $P$-assignment to refer to an assignment satisfying $P$.
We use the symbol $\top$ to denote the empty disjunction of properties, i.e., there are no constraints.
$Cr$ is the property that no player can benefit from a unilateral deviation, or equivalently the agreement is a pure-strategy Nash equilibrium, which ensures that the non-binding agreement is honoured by the players. 
$EF$ is a strong fairness property that requires no player to want to switch their assignment with another. 
$WOE$ is a weaker version of $EF$, which only requires the lighter player of any pair of players to not envy the assignment of the heavier player. 
The condition applies only to pairs of players that are assigned different resources, since it would make no sense for a player to envy another player assigned to the same resource. $SM$ is another fairness condition that says a heavier player should incur more cost than a lighter player, while $WM$ says that a heavier player should incur no less cost than a lighter player. 
Since $WOE$, $SM$, and $WM$ do not impose any fairness conditions on players of the same weight, we also introduce $Eq$, which requires players of the same weight to incur the same cost. Finally, $OE$ and $M$ combine the $Eq$ condition with $WOE$ and $WM$ respectively, as we show in \Cref{prop:relations}.

\begin{remark}[Other properties]
    \cite{Lee2025CommunicationPartitions, Lee2025PartitionEquilibria} additionally consider Pareto optimality; however, since it is implied by the stronger condition of credibility in our model, and is trivially satisfied by any assignment that uses all resources (in the case $n > m$), we do not separately consider it in this paper.
    Also notice that we do not define `Strong ordered envy-freeness' ($SOE$), which would logically be the condition $v_{a_i} - w_i < v_{a_j} - w_j$ for all $i, j : w_i < w_j, a_i \neq a_j$. This is due to the fact that this is an unreasonably strong condition which cannot be satisfied even in trivial cases. 
    For example, no $SOE$-assignment exists for the instance $\ww = (2, 3), m = 2$; even the assignment where $l(\ba) = \{\{3\},\{2\}\}$ leads player of weight~$2$ to strongly envy the player of weight~$3$.
\end{remark}

\begin{remark}[Equivalent problems]
    Notice that the \Pmksp[\top] problem, where no constraint is applied, is equivalent to the classical identical machine scheduling problem \cite{Graham69Schedulling}. The \Psat[Cr] problem is equivalent to finding a pure-strategy Nash equilibrium in the singleton congestion game \cite{Christodoulou05PoACongestionGames}, 
    and the \Pmksp[Cr] problem can be used to find the socially optimal pure-strategy Nash equilibrium via a binary search for the optimal $t$.
\end{remark}

The relations between these properties are set out in \Cref{prop:relations}. 
\Cref{fig:relations} illustrates them with a Venn diagram.

\begin{restatable}{proposition}{PropRelations}[Property relations]\label{prop:relations}
    The following relations hold: 
    \begin{enumerate}
        \item \textbf{$EF \implies Eq$}
        \item \textbf{$EF \implies WOE \implies WM$}
        \item \textbf{$EF \implies SM \implies WM$}
        \item \textbf{$WOE \land Eq \iff OE$}
        \item \textbf{$WM \land Eq \iff M$}
    \end{enumerate}
\end{restatable}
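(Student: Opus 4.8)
The plan is to prove each of the five relations by directly unwinding the definitions, the only nontrivial work being a careful case analysis on whether two players share a resource and whether their weights are equal or strictly ordered. I would group the five items into the three one-directional implications (items 1--3) and the two equivalences (items 4--5), handling them in that order.

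For items 1--3, each forward implication follows by specializing quantifiers or rearranging an inequality. For $EF \implies Eq$, given $w_i = w_j$ I would apply the $EF$ inequality to both ordered pairs $(i,j)$ and $(j,i)$; since the weights cancel, this yields $v_{a_i} \le v_{a_j}$ and $v_{a_j} \le v_{a_i}$, hence $v_{a_i} = v_{a_j}$. The implication $EF \implies WOE$ is immediate because $WOE$ quantifies over a strict subset of the pairs constrained by $EF$, and $EF \implies SM$ follows by rewriting $v_{a_i} - w_i \le v_{a_j} - w_j$ as $v_{a_i} - v_{a_j} \le w_i - w_j < 0$ when $w_i < w_j$. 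The one place demanding care is $WOE \implies WM$: since $WOE$ is silent on pairs with $a_i = a_j$, for such pairs I would note separately that $a_i = a_j$ forces $v_{a_i} = v_{a_j}$, already satisfying $WM$, while for $a_i \neq a_j$ the $WOE$ inequality gives even the strict conclusion. Finally $SM \implies WM$ is trivial, as a strict inequality implies its weak counterpart.

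For the equivalences, I would prove each direction separately. The forward directions ($WOE \land Eq \implies OE$ and $WM \land Eq \implies M$) reduce to a two-case split on whether $w_i < w_j$ or $w_i = w_j$: the strict case is handled by $WOE$ (resp.\ $WM$), and the equality case by $Eq$. The reverse directions $OE \implies WOE$ and $M \implies WM$ are immediate by restricting the hypothesis to strict weight inequalities.

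The main obstacle, such as it is, lies in recovering $Eq$ from $OE$ (and symmetrically from $M$). The key idea is that both $OE$ and $M$ use a \emph{non-strict} weight comparison $w_i \le w_j$, so when $w_i = w_j$ both orderings of the pair satisfy the hypothesis; applying the property to $(i,j)$ and to $(j,i)$ and combining the two inequalities forces $v_{a_i} = v_{a_j}$. For $M$ this is immediate, as $M$ places no restriction on $a_i$ versus $a_j$; for $OE$ one must additionally dispatch the excluded case $a_i = a_j$ separately, where $v_{a_i} = v_{a_j}$ holds trivially. This bidirectional-application trick is precisely what distinguishes $OE$ and $M$ (which encode equality) from $WOE$ and $WM$ (which do not), and it is the single conceptual point underlying both equivalences.
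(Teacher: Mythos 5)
Your proposal is correct and follows essentially the same route as the paper's proof: direct unwinding of the definitions, with the same case splits (same vs.\ different resource for $WOE \implies WM$, and strict vs.\ equal weights for the two equivalences), and the same bidirectional-application trick to extract $Eq$ from $OE$ and $M$. If anything, you are slightly more careful than the paper in explicitly dispatching the $a_i = a_j$ case when deriving $Eq$ from $OE$, which the paper's argument glosses over.
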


It might seem natural that $WOE \implies SM$, but this is not true, due to the fact that $WOE$ does not consider envy between players assigned to the same resource. Examples of assignments satisfying various properties are shown in \Cref{fig:assignment_examples}. 

\begin{figure}[h]
    \centering
    \includesvg[width=0.7\linewidth]{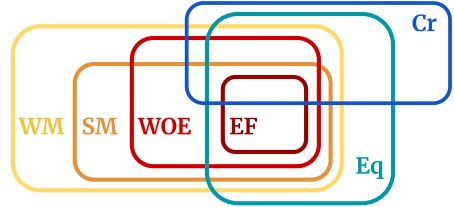}
    \caption{Venn diagram of relations between properties.}
    \label{fig:relations}
\end{figure}

\begin{figure*}
    \centering
    \adjustbox{trim=0cm 0.2cm 5.5cm 6cm}{\includesvg[width=1.3\linewidth]{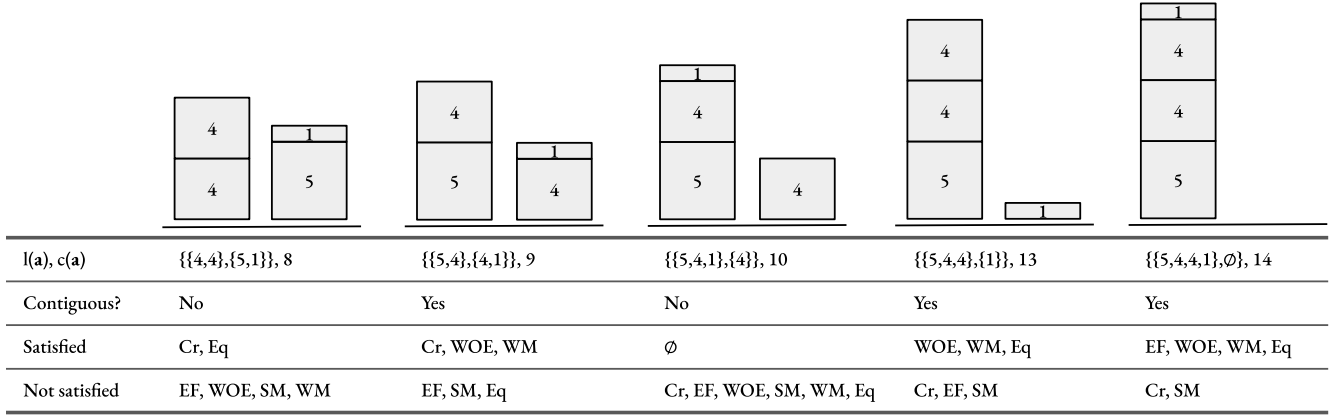}}
    \caption{Examples of assignments for an instance where $\ww = (5, 4, 4, 1)$, $m = 2$.}
    \label{fig:assignment_examples}
\end{figure*}

\subsection{Contiguous Assignment}

An important property that the satisfying assignments of many (conjunctions of) properties 
have is \emph{contiguity} as defined next.
This property allows us to quickly search for a satisfying assignment in its \emph{canonical ordering}.  


\begin{definition} [Contiguous assignment]
    An assignment is \emph{contiguous} if under a non-increasing ordering of player weights where players are indexed such that $i < j \implies w_i \geq w_j$, we have $(i < j) \land (a_i = a_j) \implies a_i = a_k \ \forall k \in [i:j]$. 
\end{definition}

\begin{definition} [Canonical ordering]
    Assume we are given a contiguous assignment where players are indexed in a non-increasing order satisfying the contiguity definition.
    Then, we can further index the resources in a non-increasing order such that, for all $i, j \in [n], w_i \geq w_j \implies a_i \leq a_j$. 
    We call this ordering of players and resources the \emph{canonical ordering} of the contiguous assignment.
\end{definition}

Thus, under the canonical ordering of a contiguous assignment, the first contiguous set of players are assigned to the first resource, the second contiguous set of players to the second resource, etc. The search for a contiguous assignment only has to find the right `cuts' to make in order to assign players to resources, which significantly reduces the computational complexity of the problem. 

\begin{example}
    Consider the load multiset $l(\ba) = \{\{2, 3\} , \{3, 3\} ,\{1, 2\}\}$. 
    This is a contiguous assignment since we can order the player weights as $(3, 3, 3, 2, 2, 1)$.
    We can then index the resources to express the assignment as $\{\{3, 3\} , \{3, 2\} ,\{2, 1\}\}$ in its canonical ordering.
\end{example}




\section{Tractable Properties} \label{sec:tractable}



    
    
    

In this section, we aim at algorithmic feasibility.
We present a general algorithmic framework based on \Cref{alg:generic}, a polynomial-time algorithm, which we term \emph{Sequential Contiguous Assignment (SCA)}.
It exploits the fact that a $P$-assignment is necessarily contiguous for a wide range of properties. 
The algorithm does so by searching for a contiguous assignment in its canonical ordering. 
We start by setting out the general, modular structure of the algorithm, and then specify the subroutines corresponding to various properties $P$ in subsequent sections.

\begin{algorithm} 
    \caption{Sequential Contiguous Assignment (SCA)\label{alg:generic}}
    \begin{algorithmic}[1]
    \Statex \textbf{Input:} A weight vector $\ww = (w_1, w_2, \dots, w_n)$, a number of resources $m$, a property $P$, and a makespan threshold~$t$.
    \Statex \textbf{Output:} A contiguous $P$-assignment of makespan at most $t$, if it exists; \outpt{None} otherwise.

    \vspace{0.5em}
    \State Sort and index $\ww$ in non-increasing order of weights.

    \vspace{0.5em}
    \State $k \gets \textsc{First\_k}(P, t, \ww, m)$
    \State $a_j \gets 1$ for all $j \in [1:k]$ \Comment{Assign players to 1st resource}
    \State $x \gets 2$ \Comment{Define resource counter}
    \State $i \gets k+1$ \Comment{Define player counter}
    \State $t \gets W[1:k]$ \Comment{Update load threshold}
    
    \vspace{0.5em}
    \While{$x \leq m$ and $i \leq n$}
        \State Compute relevant input $z$
        \State $k \gets \textsc{Subsequent\_k}(P, t, 
        \ww[i:n], m-x+1, z)$
        \State $a_j \gets x$ for all $j \in [i,i+k]$ \Comment{Assign players to resource $x$}
        \State $x \gets x+1$ \Comment{Update resource counter}
        \State $i \gets i+k$ \Comment{Update player counter}
        \State $t \gets W[i: i+k-1]$ \Comment{Update load threshold}
    \EndWhile

    \vspace{0.5em}
    \If{$i < n$} \Comment{Some players remain unassigned}
        \State \Return \outpt{None} 

    \vspace{0.5em}
    \Else \Comment{All players assigned to resources}
        \State $a \gets (a_1, ..., a_n)$
        \If{\textsc{Satisfies}$(\ba, P)$} 
        \Comment{Test $\ba$ for $P$}
            \State \Return $a = (a_1, ..., a_n)$
        \Else
            \State \Return \outpt{None}
        \EndIf
    \EndIf
    \end{algorithmic}
\end{algorithm}

Players are first sorted in a non-increasing order. Then the algorithm iteratively finds the next number of players~$k$ to assign to a new resource.
It proceeds until either a satisfying assignment is found, or it runs out of resources before all players are assigned, in which case no satisfying assignment is found. 
A final check is conducted for some cases of $P$ to ensure that the candidate assignment does indeed satisfy~$P$. 
The subroutines \textsc{First\_k}, \textsc{Subsequent\_k} and \textsc{Satisfies} differ for different problems, and are specified in the dedicated sections.
Note that a subroutine might also return \outpt{None}, which causes termination of the algorithm.

We make a few general observations about the algorithm: First, the makespan threshold~$t$ is updated after each iteration to be the load of the resource that was just assigned to ensure a non-increasing ordering of load, which is necessary for certain fairness conditions. 

Second, after the assignment of the first $k$ players to the first resource, \textsc{Subsequent\_k} is repeatedly called with the remaining players and resources until all players or resources are assigned. An additional argument $z$ is computed and passed to the \textsc{Subsequent\_k} subroutine; $z$ contains information about the players assigned to the previous resource (e.g., maximum weight, minimum weight, number of players), which is used by \textsc{Subsequent\_k} to ensure the satisfaction of $P$ in the current assignment.

Finally, the algorithm allows for only a single value of $k$ to be returned by the \textsc{First\_k} and \textsc{Subsequent\_k} subroutines at each iteration. This is sufficient as the algorithm searches for the assignment that is `closest' to satisfying $P$, so that if no assignment is found, then no satisfying assignment exists. 

\subsection{Simple Satisfiability} 

For a number of simple properties, one can always find a $P$-assignment (rendering the decision problem \Psat{} trivial) 
in polynomial time, while \Pmksp{} is NP-complete.
We now discuss such properties $P$, 
and address the corresponding \Pmksp{} problems in \Cref{sec:intractable}.

\subsubsection{\Psat[\top]}

Any assignment satisfies $\top$ when there is no makespan threshold, and thus its existence is guaranteed, but not uniqueness.

\subsubsection{\Psat[Eq]}

Consider an assignment where all players are assigned to the same resource. 
This satisfies $Eq$ since all players incur the same cost $W$ (which is the load of the resource to which they are assigned). 
Its existence is thus guaranteed, but not uniqueness.

\subsubsection{\Psat[Cr]}


The Longest Processing Time algorithm \cite{Graham69Schedulling} is known to find a $Cr$-assignment 
in time $O(n\log n)$ \cite{Fotakis2002LPT_NE}. 
It is a greedy algorithm that first sorts players by non-increasing weight, and then sequentially assigns them to the resource with the least current load. This result is included in \Cref{tab:summary}.

\subsection{Envy-Freeness}

Envy-freeness as a fairness property is extensively discussed in \cite{Lee2025CommunicationPartitions, Lee2025PartitionEquilibria} and not the primary focus of this paper. Nevertheless, we provide an algorithm for solving \Pmksp[EF] and \Pmksp[EF \land Cr] in time $O(n \sqrt n)$ in \Cref{sec:EF}.

\subsection{Weak Ordered Envy-Freeness} 

Since $WOE$ does not place any restrictions on players assigned to the same resource (i.e., a player cannot be envious of another that is assigned to the same resource), assigning all players to a single resource satisfies $WOE$. 
The existence of a $WOE$-assignment is, therefore, guaranteed, and it can be returned in time $O(n)$. 
However, it may not be unique. 
For example, the assignments $\ba$ and $\ba'$ where $l(\ba) = \{\{2, 2\}, \{2, 1\}\}$ and $l(\ba') = \{\{2,2,2\},\{1\}\}$ both satisfy $WOE$. 
Moreover, we can make a stronger statement about $WOE$-assignments.

\begin{restatable}{lemma}{LemWOEContiguous} \label{lem:WOE_contiguous}
    Every $WOE$-assignment is contiguous.
\end{restatable}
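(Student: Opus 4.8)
The plan is to prove the statement directly (rather than by contradiction on a non-contiguous triple): given any $WOE$-assignment $\ba$, I will exhibit a non-increasing ordering of the players under which every resource's players occupy a contiguous block of indices, which is exactly what contiguity asks for. Here I rely on reading the definition of contiguity as requiring only the \emph{existence} of one suitable non-increasing indexing, so that equal-weight players on different resources cause no difficulty.

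The heart of the argument is a single structural fact comparing players across resources, obtained by letting the lighter member of a cross-resource pair invoke $WOE$. Take any two players $p, q$ with $a_p \neq a_q$. If the player on the heavier-loaded resource were the strictly lighter one, then writing $i$ for the lighter and $j$ for the heavier, $w_i < w_j$ and $a_i \neq a_j$ would force $v_{a_i} - w_i \le v_{a_j} - w_j$, i.e. $v_{a_i} - v_{a_j} \le w_i - w_j < 0$. First I would use this to conclude: if $v_{a_p} > v_{a_q}$ then $w_p \ge w_q$ (otherwise the displayed inequality would yield $v_{a_p} < v_{a_q}$). Then I would treat the equal-load case $v_{a_p} = v_{a_q}$: applying $WOE$ to whichever of $p, q$ is lighter immediately forces $w_p = w_q$, so any two players on distinct resources of equal load have equal weight. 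Together these say that ranking resources by load is consistent with ranking their players by weight.

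Given this, I would build the ordering explicitly: list the resources in order of non-increasing load (breaking ties arbitrarily), list the players of each resource by non-increasing weight, and read off the players block by block. Each resource's players are consecutive by construction, so it only remains to check that the concatenation is globally non-increasing, and this reduces to each boundary between consecutive resources $x$ and $x'$ with $v_x \ge v_{x'}$. When $v_x > v_{x'}$ the structural fact gives $\min_{p \in x} w_p \ge \max_{q \in x'} w_q$; when $v_x = v_{x'}$ all players on both resources share a single weight; so in either case the last weight of the $x$-block is at least the first weight of the $x'$-block. Empty resources carry load $0$ and are simply placed last. This yields a non-increasing ordering in which same-resource players are contiguous, establishing the lemma (and, incidentally, matching the canonical ordering).

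The main obstacle is the careful handling of ties, on two fronts: reading contiguity as the existence of a suitable non-increasing indexing (so that equal-weight players on different resources pose no problem for the block structure), and separately dispatching the equal-load case, where $WOE$ provides no strict separation of loads and one must instead deduce that all players involved have identical weight. Once this equal-load subtlety is isolated, the remaining boundary checks are routine.
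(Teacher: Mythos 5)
Your proof is correct, and it takes a genuinely different route from the paper's, even though both ultimately hinge on the same one-line consequence of $WOE$: if $w_i < w_j$ and $a_i \neq a_j$, then $v_{a_i} - w_i \leq v_{a_j} - w_j$ forces $v_{a_i} < v_{a_j}$. The paper argues by contradiction: it asserts that a non-contiguous $WOE$-assignment must, under \emph{any} non-increasing indexing of players and resources, contain a pair $i,j$ with $w_i < w_j$ but $a_i < a_j$ (hence $v_{a_i} \geq v_{a_j}$), and then derives the $WOE$ violation; the existence of such a pair is asserted rather than justified. You instead prove the statement directly and constructively: from $WOE$ you extract the cross-resource consistency facts ($v_{a_p} > v_{a_q} \Rightarrow w_p \geq w_q$, and $v_{a_p} = v_{a_q} \Rightarrow w_p = w_q$), then explicitly build a witness ordering (resources by non-increasing load, players within a resource by non-increasing weight) and verify every block boundary. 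What your version buys is precisely the step the paper leaves implicit --- why non-contiguity produces the violating pair, or equivalently why the ordering you build is globally non-increasing --- and that is exactly where equal-load ties must be dispatched, via your observation that equal loads on distinct resources force equal weights. Your construction also yields, as a by-product, that every $WOE$-assignment admits a canonical ordering with non-increasing loads, a fact the paper tacitly relies on later (e.g., in \Cref{lem:local_to_global_WOE} and \Cref{prop:WOE_Cr_uniqueness}). Finally, your existential reading of the contiguity definition is the right one: under the universal reading, an instance such as three players of weight $2$ with loads $\{\{2,2\},\{2\}\}$ satisfies $WOE$ but would fail contiguity for some tie-breaking orders, falsifying the lemma. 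The paper's proof is shorter; yours is more self-contained and rigorous about ties.
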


Hence, it is sufficient to consider only contiguous assignments to answer \Pmksp{} if $P$ contains $WOE$.

In addition, we introduce a lemma which says that under the canonical ordering, if $WOE$ holds between players in adjacent resources, then $WOE$ holds globally.

\begin{restatable}{lemma}{LemWOEGlobal} \label{lem:local_to_global_WOE}
    Let $\ba$ be a contiguous assignment in its canonical ordering. If the $WOE$ condition holds between players assigned to any two adjacent resources, 
    then $\ba$ satisfies $WOE$.
\end{restatable}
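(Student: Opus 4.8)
The plan is to exploit the canonical ordering to turn the global condition into a chain of the given adjacent-resource conditions. Write the resources actually in use as $1,\dots,M$, so that each resource $x$ receives a contiguous block of the sorted players and, since heavier players occupy earlier resources, every player on resource $x$ is at least as heavy as every player on resource $x+1$. For a pair $i,j$ with $w_i<w_j$ and $a_i\neq a_j$, the heavier player $j$ must sit on the earlier resource; writing $a_j=x<y=a_i$, the goal reduces to showing $v_y-w_i\le v_x-w_j$ for every such pair with $x<y$.

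For each used resource $x$ let $w^+_x$ and $w^-_x$ denote the largest and smallest weight among the players on $x$, and set $\alpha_x:=v_x-w^+_x$ and $\beta_x:=v_x-w^-_x$. Since $w^+_x\ge w^-_x$ we always have $\alpha_x\le\beta_x$, and because $w^-_y\le w_i$ and $w_j\le w^+_x$ we get $v_y-w_i\le\beta_y$ and $v_x-w_j\ge\alpha_x$. Hence it suffices to prove the single resource-level inequality $\beta_y\le\alpha_x$ for all $x<y$; intuitively, $\alpha$ and $\beta$ capture the hardest instance of the envy condition, namely the lightest envier on the later resource against the heaviest target on the earlier one.

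I would obtain $\beta_y\le\alpha_x$ by telescoping across the intermediate resources, $\beta_y\le\alpha_{y-1}\le\beta_{y-1}\le\alpha_{y-2}\le\cdots\le\beta_{x+1}\le\alpha_x$, where the links $\alpha_{x'}\le\beta_{x'}$ are the trivial within-resource inequalities above and the links $\beta_{x'+1}\le\alpha_{x'}$ are a per-boundary claim. To prove one boundary claim $\beta_{x+1}\le\alpha_x$, I apply the hypothesised adjacent-resource $WOE$ condition to the lightest player of resource $x+1$ (weight $w^-_{x+1}$) and the heaviest player of resource $x$ (weight $w^+_x$): these players lie on adjacent resources and $w^-_{x+1}\le w^+_{x+1}\le w^-_x\le w^+_x$, so the lighter one is on $x+1$ and $WOE$ yields $v_{x+1}-w^-_{x+1}\le v_x-w^+_x$, which is exactly $\beta_{x+1}\le\alpha_x$.

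The step I expect to be the main obstacle is the boundary case $w^-_{x+1}=w^+_x$, which forces $w^-_{x+1}=w^+_{x+1}=w^-_x=w^+_x$, so that resources $x$ and $x+1$ host only players of one common weight. Here the $WOE$ hypothesis is vacuous (it never constrains equal-weight players), yet the claim $\beta_{x+1}\le\alpha_x$ reduces to $v_{x+1}\le v_x$. This is where the canonical ordering must do real work: I would rely on its non-increasing load structure (so that $v_{x+1}\le v_x$ across every boundary, including those internal to an equal-weight run) to close this case. This ingredient is genuinely necessary, since the weight ordering alone is insufficient: for $\ww=(4,2,2,2)$ the contiguous assignment with blocks $\{4\},\{2\},\{2,2\}$ and loads $4,2,4$ satisfies every adjacent-resource condition yet violates global $WOE$ (the weight-$2$ players on the last resource envy the weight-$4$ player on the first). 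Once the equal-weight boundaries are controlled by the load ordering, the telescoped inequality gives $\beta_y\le\alpha_x$, and hence global $WOE$.
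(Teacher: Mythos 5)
Your proof is correct and takes essentially the same approach as the paper's: both establish the boundary inequality ``load minus lightest player on resource $x{+}1$ is at most load minus heaviest player on resource $x$'' (your $\beta_{x+1}\le\alpha_x$) and chain it with the trivial within-resource inequality $\alpha_x\le\beta_x$ to telescope the $WOE$ condition across all pairs of resources. If anything, your treatment is slightly more careful than the paper's, which reads the adjacent-resource hypothesis directly as $v_{x+1}-\underline{w}_{x+1}\le v_x-\overline{w}_x$ without flagging that when $\underline{w}_{x+1}=\overline{w}_x$ (all players on both resources of equal weight) the $WOE$ hypothesis is vacuous and this inequality must instead come from the non-increasing-load property of the canonical ordering --- precisely the subtlety your $\ww=(4,2,2,2)$ example isolates.
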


\Cref{lem:WOE_contiguous} suggests that \Cref{alg:generic} can be used to solve \Pmksp{} whenever WOE is entailed by $P$.
Hence, we obtain a correct algorithm by specifying the subroutines.
While we defer a formal proof of correctness to the appendix, we subsequently discuss the subroutines and their intuition.
Note that the formal proof uses \Cref{lem:local_to_global_WOE} to prove WOE.

\begin{restatable}{theorem}{TheoremWOE}\label{thm:WOE_algo_correctness}  
    \Pmksp{} can be solved in time $O(n \log n)$ for $P \in \{WOE, WOE \land Cr, WOE \land Eq \land Cr\}$, and in time $O(n^2)$ for $P = WOE \land Eq$. 
\end{restatable}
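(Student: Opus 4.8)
The plan is to instantiate the SCA framework (\Cref{alg:generic}) with property-specific subroutines and to argue correctness through a greedy exchange argument. By \Cref{lem:WOE_contiguous}, every assignment satisfying any of the properties $P$ in the statement is contiguous, so it suffices to search over contiguous assignments in canonical ordering, which is precisely the search space of SCA; after sorting in $O(n\log n)$ time the loads are non-increasing across resources by construction, and by \Cref{lem:local_to_global_WOE} it suffices to enforce $WOE$ only between \emph{adjacent} resources. I would first translate the adjacent-resource $WOE$ condition into a concrete inequality on cut points: for a block on resource $x$ followed by a block on resource $x+1$ with loads $v_x \ge v_{x+1}$, the binding constraint is $\max_x - \min_{x+1} \le v_x - v_{x+1}$, where $\max_x$ (resp.\ $\min_{x+1}$) is the largest (resp.\ smallest) weight on resource $x$ (resp.\ $x+1$); this is exactly the information carried by the auxiliary argument $z$ between iterations.

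Second, I would specify the subroutines. For $P = WOE$, \textsc{First\_k} returns the largest $k$ with $W[1:k] \le t$, and \textsc{Subsequent\_k} returns the largest block whose load is at most the previous load $t$ and that satisfies the adjacent inequality above; using prefix sums each call is $O(\log n)$ by binary search, so the whole loop is dominated by the initial sort at $O(n\log n)$. For the $Cr$ variants I would additionally enforce credibility, which under non-increasing loads reduces to $v_{a_i} - w_i \le \min_x v_x$ for every player, i.e.\ a single bound tying the heaviest-loaded resource and its lightest player to the least-loaded resource. Since these instances admit a unique satisfying assignment, the greedy in fact produces the only candidate, which \textsc{Satisfies} then verifies, keeping the cost at $O(n\log n)$.

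The main obstacle is the correctness of maximal packing: I must show that taking the largest feasible block at each step never discards a solution. The subtlety is that enlarging an earlier (heavier) block changes both $v_x - v_{x+1}$ and the boundary weights feeding into $\max_x$ and $\min_{x+1}$, so it is not immediate that greedily maximizing a cut preserves satisfiability downstream. I would resolve this with an exchange argument: given any contiguous $WOE$-assignment of makespan at most $t$, I transform it block by block into the greedy output by moving boundary players to earlier resources, checking that each move preserves the non-increasing load ordering, the threshold $t$, and the adjacent $WOE$ inequality, and never increases the number of resources used. Establishing that this exchange is load- and envy-monotone is the crux, and it is here that \Cref{lem:local_to_global_WOE} is essential, since it certifies global $WOE$ from the adjacent constraints maintained throughout.

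Finally, the $O(n^2)$ bound for $P = WOE \land Eq$ stems from the extra coupling that $Eq$ introduces. Because equal-weight players are contiguous, $Eq$ forces any equal-weight group split across several resources to lie on resources of exactly equal load, linking cuts that the local greedy would otherwise treat independently; the corresponding \textsc{Subsequent\_k} can no longer take a single maximal block but must decide whether to keep such a group intact or split it into equal-load parts, a choice requiring an $O(n)$ feasibility scan per resource and hence $O(n^2)$ overall. I expect the same exchange argument to certify correctness once the subroutine is shown to explore the unique viable split at each boundary; the absence of $Cr$ is what forces this search, whereas adding $Cr$ pins the loads to a single candidate and restores the $O(n\log n)$ bound for $WOE \land Eq \land Cr$.
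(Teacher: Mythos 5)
Your overall plan coincides with the paper's: sort, run SCA, invoke \Cref{lem:WOE_contiguous} to restrict the search to contiguous assignments in canonical ordering, invoke \Cref{lem:local_to_global_WOE} to certify global $WOE$ from adjacent-resource constraints, use maximal packing for $P=WOE$, an $O(n)$-per-resource scan for $WOE \land Eq$ (hence $O(n^2)$), and a unique-candidate-plus-final-verification structure for the $Cr$ variants. Your exchange argument for completeness of maximal packing is a reasonable way to make rigorous what the paper argues by contradiction (that no valid assignment can place more players on any prefix of resources than the greedy), and you correctly flag the subtlety that enlarging earlier blocks changes the boundary weights in the adjacent constraint --- a point the paper's own terse completeness argument glosses over.

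The genuine gap is in the $Cr$ cases. You reduce credibility to $v_{a_i}-w_i \le \min_x v_x$ and then assert that ``the greedy in fact produces the only candidate.'' Uniqueness of a $WOE \land Cr$-assignment does not imply that your greedy finds it, and the condition $v_{a_i}-w_i \le \min_x v_x$ cannot be enforced during the forward pass, because $\min_x v_x$ depends on blocks that have not yet been formed. Under the natural reading of your algorithm --- maximal $WOE$ packing followed by a final $Cr$ check --- the algorithm is incorrect: for $\ww=(2,2,2,2)$, $m=2$, and any $t \ge 8$ (including \Psat{} with $t=\infty$), maximal packing places all four players on one resource, giving loads $(8,0)$, which satisfies $WOE$ vacuously but fails the $Cr$ check, so you return \outpt{None}; yet the assignment with loads $(4,4)$ satisfies $WOE \land Cr$ and meets the threshold. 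The paper avoids this by changing the block-selection rule itself for the $Cr$ variants: at each iteration it requires $W[1:k-1] \le \frac{W-W[1:k]}{m-1} \le W[1:k] \le t$ (\Cref{eqn:WOE_Cr_bounds_on_remainders}), a necessary condition comparing the block's load against the \emph{average} load of the remaining resources; by \Cref{lem:unique_k} at most one $k$ satisfies it per iteration, so the forward pass constructs the only possible candidate, which \textsc{Satisfies} then verifies. Some online surrogate of this kind for the unknown future minimum load is what your proposal is missing, and without it the $O(n\log n)$ claims for $WOE\land Cr$ and $WOE\land Eq\land Cr$ do not follow.
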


\subsubsection{\Pmksp[WOE]}

We define three subroutines that are called by \Cref{alg:generic} to solve the \Pmksp[WOE] problem.
The idea is to assign as many players (in non-increasing weight order) to a resource as possible.
This reduces the propensity for a lighter player that is yet to be assigned to envy a heavier player that is assigned to a high-load resource, and minimizes the total number of resources required to assign all players.


\begin{restatable}{algorithmic}{WOEFirstk}
    \Procedure{First\_k}{$P = WOE, t, \ww, m$}
        \State \Return $\max \{ i\in [n] \colon W[1:i] \leq t\}$
    \EndProcedure
\end{restatable}  

This subroutine assigns the maximum number of players to the first resource without exceeding the makespan threshold~$t$. 
Notice that for \Psat[WOE] where $t = \infty$, the subroutine returns $k = n$, i.e., all players are assigned to the first resource.

For the assignment of players to subsequent resources, let $z$ be the largest weight assigned to the previous resource. 
This information is used to determine the largest number of players that can be assigned to the next resource without triggering envy, i.e., we need to ensure the that load on the current resource minus its lightest player does not exceed the load on the previous resource minus its heaviest player. 
The last $\min$ operation ensures that $k$ does not exceed the number of remaining players.


\begin{restatable}{algorithmic}{WOESubsequentk}
    \Procedure{Subsequent\_k}{$P = WOE, t, \ww, m, z$}
        \State Set $\hat i = \max \{ i \colon W[1:i] \leq t - z \}$
        \State \Return $\min(\hat i + 1, |\ww|)$
    \EndProcedure
\end{restatable}

Notice that, by construction, the $WOE$ condition is satisfied between players assigned to any two adjacent resources. By \Cref{lem:local_to_global_WOE}, this means that the candidate assignment satisfies $WOE$. So no further check for $WOE$-satisfaction is required and the \textsc{Satisfies} subroutine always returns \outpt{True}.




\subsubsection{\Pmksp[WOE\land Eq]}

We begin with a structural proposition about $WOE \land Eq$-assignments, which says that if players of the same weight are assigned to different resources, then it must be the case that all players sharing those resources are of the same weight.

\begin{restatable}{lemma}{LemWOEEq} \label{lem:WOE_Eq}
    Let $\ba$ be a $WOE \land Eq$-assignment. Assume that $i, j \in [n]$ with $w_i = w_j$ and $a_i \neq a_j$. Then $w_k = w_j$ for all $k \in [n]$ with $a_k = a_i$ or $a_k = a_j$. 
\end{restatable}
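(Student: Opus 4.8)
The plan is to combine the two hypotheses in the most direct way: $Eq$ pins down the loads of the two resources to a common value, and then $WOE$, applied in the appropriate direction, forces every player on those resources to have weight exactly $w_j$.

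First I would invoke $Eq$. Since $w_i = w_j$ and $a_i \neq a_j$, the property $Eq$ gives $v_{a_i} = v_{a_j}$; call this common load $L$. It is enough, by the symmetry of the hypothesis in $i$ and $j$ (and because $w_i = w_j$), to prove $w_k = w_j$ for an arbitrary $k$ with $a_k = a_i$, as the case $a_k = a_j$ then follows by interchanging the roles of $i$ and $j$. Note that for such a $k$ we have $v_{a_k} = v_{a_i} = L$, and moreover $a_k = a_i \neq a_j$, so $k$ and $j$ are assigned to different resources.

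The core is then a two-case argument assuming, for contradiction, that $w_k \neq w_j$. If $w_k < w_j$, then $WOE$ applies to the pair $(k,j)$ — with $k$ the lighter player on a resource different from that of $j$ — yielding $v_{a_k} - w_k \leq v_{a_j} - w_j$, that is, $L - w_k \leq L - w_j$, hence $w_j \leq w_k$, contradicting $w_k < w_j$. Symmetrically, if $w_k > w_j$, then $WOE$ applies to the pair $(j,k)$, giving $L - w_j \leq L - w_k$, hence $w_k \leq w_j$, again a contradiction. Thus $w_k = w_j$, and the symmetric argument handles players on $a_j$, completing the proof.

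I do not expect a genuine obstacle here: the whole argument reduces to a short deduction once the equal-load observation is in place. The only point demanding care is to apply the $WOE$ inequality in the correct orientation — that is, to identify correctly which of $k$ and $j$ is the lighter player, since $WOE$ is asymmetric — and to verify that its precondition $a_k \neq a_j$ genuinely holds, which it does because $a_k = a_i \neq a_j$.
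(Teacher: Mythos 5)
Your proof is correct and follows essentially the same route as the paper's: use $Eq$ to equate the loads on the two resources, then derive a contradiction from $WOE$ in both the $w_k < w_j$ and $w_k > w_j$ cases. The only difference is presentational — you make the symmetry reduction (handling $a_k = a_j$ by swapping the roles of $i$ and $j$) explicit, which the paper leaves implicit.
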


We thus need to modify our previous $WOE$ subroutines to ensure $Eq$-satisfaction by changing the way $k$ is determined in \textsc{First\_k} and \textsc{Subsequent\_k}: 

\begin{itemize}
    \item If not all players assigned to the same resource have the same weight, then the next player to be assigned to the next resource (the $(k+1)$st player) cannot share the same weight as the last player of the current assignment (the $k$th player).
    \item If all players assigned to the same resource have the same weight, then all players of such weight must be evenly distributed among some number of resources so they incur the same cost. I.e., if $k$ players of the same weight $w_i$ are assigned to some resource, then it must be the case that $k \mid \#(w_i)$ and all assigned resources have the same load.
\end{itemize}

The subroutine \textsc{First\_k} for \Pmksp[WOE\land Eq] is as follows:

\begin{restatable}{algorithmic}{WOEEqFirstk}
    \Procedure{First\_k}{$P = WOE \land Eq, t, \ww, m$}
        \State Set $n = |\ww|$, $w_{n+1} = 0$ 
        \If{ $\{ i \in [n]\colon (W[1:i] \leq t) \land (w_{i+1} \neq w_i)  \}\neq \emptyset$}
            \State \Return $\max \{ i \in [n]\colon (W[1:i] \leq t) \land (w_{i+1} \neq w_i)  \}$
        \Else{}
            \State \Return $\max \{ i\in [n]\colon (i \mid \#(w_1)) \land (i \cdot w_1 \leq t) \}$
        \EndIf
    \EndProcedure
\end{restatable}

For \textsc{Subsequent\_k}, we need a number of additional information to ensure $WOE \land Eq$-satisfaction. 
Let $z$ be the tuple $(\underline{w}, \overline{w}, k')$, representing the weight of the last player, the weight of the first player, and the number of players assigned to the previous resource, respectively.
$\underline{w}$ and $k'$ are used to ensure satisfaction of $Eq$ by checking whether the current first player shares the same weight as the last player assigned to the previous resource; if so, then we simply need to assign $k'$ players (of the same weight) to the current resource. $\overline{w}$ is used to ensure $WOE$ satisfaction, as we have seen before.

\begin{restatable}{algorithmic}{WOEEqSubsequentk}
    \Procedure{Subsequent\_k}{$P = WOE \land Eq, t, \ww, m, z$}
        \State Let $z = (\overline{w}, \underline{w}, k')$
        \If { $w_1 = \underline{w}$} \Comment{Continuation of players of equal weight}
            \State \Return $k'$
        \EndIf
    
        \State Set $n = |\ww|$, $b = t - \overline{w}$, $w_{n+1} = 0$.
        \If{ $\{ i \in [n]\colon (W[1:i-1] \leq b) \land (w_{i+1} \neq w_i)  \}\neq \emptyset$}
            \State \Return $\max \{ i \in [n]\colon (W[1:i-1] \leq b) \land (w_{i+1} \neq w_i)  \}$
        \Else{}
            \State \Return $\max \{ i\in [n]\colon (i \mid \#(w_1)) \land ((i-1) \cdot w_1 \leq b) \}$
        \EndIf
    \EndProcedure
\end{restatable}


Again, no further checks whether $WOE \land Eq$ is satisfied is required, so the \textsc{Satisfies} subroutine always returns \outpt{True}.


\subsubsection{\Pmksp[WOE\land Cr]}


While the $WOE$ condition implicitly requires that resources assigned later have no greater load than the resources before it, the $Cr$ condition requires that such differences are not so large that players assigned to earlier resources could benefit from deviating to later resources. 
The idea is that, in each iteration, the subroutines now assign a number of players $k$ to a resource to ensure that the average load of the remaining resources, $\frac{W - W[1:k]}{m-1}$, can satisfy $WOE \land Cr$, i.e.,


\begin{equation} \label{eqn:WOE_Cr_bounds_on_remainders}
    W[1:k-1] \leq \frac{W - W[1:k]}{m-1} \leq W[1:k]
\end{equation}

The first inequality is a necessary (but not sufficient) condition to satisfy $Cr$: the average load on the remaining resources must be sufficiently large to prevent player~$k$ from deviating to the least-loaded resource. We need only satisfy this for player~$k$ (and not others assigned to the same resource) to produce the largest lower-bound.

The second inequality is a necessary (but not sufficient) condition to satisfy $WOE$: it ensures that the average load on the remaining resources is sufficiently small to ensure a canonical ordering of contiguous assignment (\Cref{lem:WOE_contiguous}).
%
%
It's easy to see that at most one value of~$k$ satisfies these bounds:

\begin{restatable}{lemma}{LemUniquek} \label{lem:unique_k}
    At most one value of $k$ satisfies \Cref{eqn:WOE_Cr_bounds_on_remainders}.
\end{restatable}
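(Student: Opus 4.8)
The plan is to exploit two opposite monotonicities among the quantities in \Cref{eqn:WOE_Cr_bounds_on_remainders}. Writing $S_k := W[1:k]$ for brevity, I first observe that since all weights are positive integers we have $S_k - S_{k-1} = w_k > 0$, so $S_k$ is \emph{strictly increasing} in $k$ (with the convention $S_0 = 0$). Second, consider the ``remaining average'' $f(k) := \frac{W - S_k}{m-1}$. Since $m > 1$ gives $m - 1 > 0$, and the numerator $W - S_k$ strictly decreases as $S_k$ strictly increases, the function $f(k)$ is \emph{strictly decreasing} in $k$. With this notation, the two inequalities of \Cref{eqn:WOE_Cr_bounds_on_remainders} read $S_{k-1} \le f(k)$ (the lower bound) and $f(k) \le S_k$ (the upper bound).

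The core of the argument is a single contradiction obtained by assuming two distinct feasible indices. Suppose for contradiction that both $k$ and $k'$ satisfy \Cref{eqn:WOE_Cr_bounds_on_remainders} with $k < k'$. I would chain four relations: the upper bound at $k$ gives $f(k) \le S_k$; monotonicity of $S$ gives $S_k \le S_{k'-1}$ (valid because $k \le k'-1$); the lower bound at $k'$ gives $S_{k'-1} \le f(k')$; and strict monotonicity of $f$ gives $f(k') < f(k)$ (since $k < k'$). Concatenating these yields $f(k) \le S_k \le S_{k'-1} \le f(k') < f(k)$, i.e.\ $f(k) < f(k)$, a contradiction. Hence no two distinct values of $k$ can both satisfy the bounds, which is exactly the claim.

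I expect no serious obstacle: the lemma is essentially the statement that two quantities moving in opposite directions can cross within a window of width at most one. The one point that deserves care is that ``at most one $k$'' must rule out \emph{arbitrary} pairs $k < k'$, not merely consecutive ones; the chain above handles this uniformly by inserting $S_{k'-1}$ as the hinge linking the upper bound at $k$ to the lower bound at $k'$, so no separate case analysis for gaps between candidates is needed. I would also record the boundary convention $W[1:0] = 0$ so that the lower bound is well defined at $k = 1$, and note that the strictness of $f$ (hence the final strict inequality) relies precisely on the standing assumptions $m > 1$ and $w_k > 0$.
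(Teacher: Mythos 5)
Your proof is correct and follows essentially the same route as the paper's: both assume two feasible indices, chain the upper bound at the smaller index through the monotonicity of the prefix sums into the lower bound at the larger index, and contradict the strict decrease of the remaining average $\frac{W - W[1:k]}{m-1}$ (which relies on $m>1$ and positive weights, as you note). Your write-up is in fact slightly tighter, dropping the two inequalities of the paper's chain that are never used and making the boundary convention $W[1:0]=0$ explicit.
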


As we apply this argument to every iteration, we must reach a unique canonical ordering of resources if one exists.

\begin{corollary}[Uniqueness] \label{prop:WOE_Cr_uniqueness}
    If $\ba$ and $\ba'$ satisfy $WOE \land Cr$, then $\ba$ and $\ba'$ are identical under their canonical orderings.
\end{corollary}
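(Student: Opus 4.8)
The plan is to show that any $WOE \land Cr$-assignment, written in its canonical ordering, is forced resource-by-resource, so that the entire load distribution is determined and therefore common to $\ba$ and $\ba'$. By \Cref{lem:WOE_contiguous} both assignments are contiguous, hence both admit a canonical ordering, and it suffices to prove that the sequence of ``cuts'' (equivalently, the multiset of loads) is identical. I would do this by induction on the resource index $x$, showing that the number of players $k_x$ assigned to resource $x$ is uniquely determined by $\ww$ and $m$.

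For the base case, I first show that the load $W[1:k_1]$ on the first resource of any $WOE \land Cr$-assignment satisfies \Cref{eqn:WOE_Cr_bounds_on_remainders}. The left inequality is the $Cr$-condition applied to the last (lightest) player on resource~$1$: this player must not profit from moving to the least-loaded resource, whose load is at most the average $\frac{W-W[1:k_1]}{m-1}$ of the other $m-1$ resources; since $Cr$ gives $W[1:k_1]-w_{k_1}\le v_{x^\ast}$ for the least-loaded resource $x^\ast$, and $v_{x^\ast}\le\frac{W-W[1:k_1]}{m-1}$, chaining yields $W[1:k_1-1]\le\frac{W-W[1:k_1]}{m-1}$. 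The right inequality says resource~$1$ carries at least the average load of the remaining resources, which follows because in the canonical ordering resource~$1$ holds the heaviest players while $WOE$ forces every later resource to have no larger load, so resource~$1$ is the most loaded and hence above the average. Given both inequalities, \Cref{lem:unique_k} shows $k_1$ is the unique solution, so $\ba$ and $\ba'$ agree on their first resource.

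For the inductive step I delete resource~$1$ together with its first $k_1$ players. The restriction of a $WOE \land Cr$-assignment to the remaining $m-1$ resources is again a $WOE \land Cr$-assignment on the sub-instance $(\ww[k_1+1:n], m-1)$, since neither new envy nor a new profitable deviation can arise among a subset of the resources. Applying the same argument, with \Cref{lem:unique_k} instantiated on the sub-instance, shows $k_2$ is uniquely determined, and so on; iterating exhausts all players and resources and proves that $\ba$ and $\ba'$ have identical canonical orderings.

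The main obstacle is the right inequality in the base case, i.e.\ the claim that resource~$1$ carries at least the average remaining load. When the boundary players of adjacent resources have distinct weights, $WOE$ directly forces a strictly smaller load on the later resource, so resource~$1$ is the most loaded and the inequality is immediate. The delicate case is when a block of equal-weight players straddles several resources: there $WOE$ is vacuous across those resources, and one must invoke $Cr$ to bound the pairwise load differences among them by a single weight unit, together with the convention that orders equal-weight resources by non-increasing load, in order to conclude that resource~$1$ still dominates the average. Making this tie-handling precise is where I expect the argument to require the most care.
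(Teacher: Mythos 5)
Your proposal is correct and takes essentially the same route as the paper: establish that both inequalities of \Cref{eqn:WOE_Cr_bounds_on_remainders} are necessary for any $WOE \land Cr$-assignment in canonical ordering (the lower one via $Cr$ against the least-loaded resource, the upper one via resource~1 being most loaded), then apply \Cref{lem:unique_k} iteratively to the sub-instances obtained by deleting the first resource and its players. The paper leaves all of this implicit in the discussion surrounding \Cref{lem:unique_k}; your write-up supplies the missing induction and correctly flags the equal-weight tie case, which in fact resolves without $Cr$ --- $WOE$ forces every resource containing a player lighter than $w_1$ to have strictly smaller load than any other resource containing a $w_1$-player, and the canonical-ordering convention orders the remaining pure-$w_1$ blocks by non-increasing load, so resource~1 is the global maximum.
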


The example below demonstrates that $k$ may not exist.

\begin{example} 
    Consider the instance where $\ww = (3, 2, 2)$ and $m = 2$. 
    If $k = 1$, then we get $(W - W[1 \colon 1])/(m-1) = 4$, which exceeds the upper-bound $W[1 \colon 1] = 3$. If $k = 2$, then $(W - W[1 \colon2])/(m-1) = 2$, which falls below the lower-bound $W[1 \colon 1] = 3$. 
\end{example}

Explicitly, the assignment of players to resources is governed by the following subroutines:


\begin{restatable}{algorithmic}{WOECrFirstk}
    \Procedure{First\_k}{$P = WOE \land Cr, t, \ww, m$}
        \If{ $\exists k\colon W[1:k-1] \leq \frac{W - W[1:k]}{m-1} \leq W[1:k] \leq t $}
            \State \Return $k$    
        \Else{}
            \Return  \outpt{None}
        \EndIf
    \EndProcedure
\end{restatable}

\begin{restatable}{algorithmic}{WOECrSubsequentk}
    \Procedure{Subsequent\_k}{$P = WOE \land Cr, t, \ww, m, z$}
    \If { $m = 1$} \Comment{Final resource}
        \State \Return $|\ww|$
    \EndIf
    \If { $\exists k\colon W[1:k-1] \leq \frac{W - W[1:k]}{m-1} \leq W[1:k] \leq t $}
        \State \Return $k$
    \Else{}
        \Return  \outpt{None}
    \EndIf
    \EndProcedure
\end{restatable}

Finally, even if the iterations of assigning players to resources result in a unique candidate assignment $\ba$, $\ba$ has only satisfied the necessary, but not sufficient, conditions of $WOE \land Cr$. 
Thus a final test is required to check that the candidate assignment satisfies $WOE \land Cr$. 
For $WOE$, we need only check that the lightest player of each resource assignment does not envy the heaviest player assigned to the previous resource, and there are at most $m-1 < n$ resources to check; for $Cr$, since resource loads are ordered non-increasingly, we need only check that the heaviest player of each resource does not benefit from deviating to the last and least-loaded resource, for which there are again $m-1 < n$ resources to check.\footnote{For simplicity we check that each player has no incentive to deviate to the least-loaded resource in \Cref{sec:algorithm}.} These checks can thus be done in time $O(n)$.

\subsubsection{\Pmksp[WOE \land Eq \land Cr]}

Since the previous algorithm for \Pmksp[WOE \land Cr] produces only a single feasible candidate, we need only conduct an additional check on whether this candidate satisfies $Eq$ in the \textsc{Satisfies} subroutine 
to answer this case. 
The subroutines \textsc{First\_k} and \textsc{Subsequent\_k} remain unchanged.






\subsubsection{Satisfiability Problems}

Notice that for \Psat[WOE] and \Psat[WOE \land Eq], assigning all players to the first resource is always a solution, which can be returned in time $O(n)$.  Also, because there is at most one feasible assignment for the \Pmksp[WOE \land Cr] and \Pmksp[WOE \land Eq \land Cr] problems, a solution is unique if it exists. 
\Cref{thm:WOE_algo_correctness} implies that the corresponding satisfaction problem (i.e., with $t = \infty$) can be solved in $O(n \log n)$.

\begin{corollary}
    \Psat[WOE] and \Psat[WOE \land Eq] can be solved in time $O(n)$. 
    \Psat[WOE \land Cr] and \Psat[WOE \land Eq \land Cr] can be solved in time $O(n \log n)$. 
\end{corollary}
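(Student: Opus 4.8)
The plan is to reduce each of the four cases to results already established and to note where a logarithmic factor can be shaved by avoiding the sort. The key observation throughout is that \Psat{} is exactly \Pmksp{} with the threshold set to $t = \infty$, so any algorithm solving the latter also solves the former.

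For \Psat[WOE \land Cr] and \Psat[WOE \land Eq \land Cr], I would simply invoke \Cref{thm:WOE_algo_correctness} with $t = \infty$. That theorem already gives an $O(n \log n)$ algorithm for \Pmksp{} with these two properties, and setting $t = \infty$ merely makes the trailing constraint $W[1:k] \le t$ in the \textsc{First\_k} and \textsc{Subsequent\_k} subroutines vacuous, without affecting the rest of the computation. The running time is then dominated by the initial sort, giving $O(n \log n)$. By \Cref{prop:WOE_Cr_uniqueness} the single candidate the algorithm produces is the only possible satisfying assignment, so reporting it (or \outpt{None}) settles the problem.

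For \Psat[WOE] and \Psat[WOE \land Eq] I would not run \Cref{alg:generic} at all, but instead exhibit an explicit witness. The claim is that the assignment $\ba = (1, \dots, 1)$ placing every player on the first resource satisfies $WOE \land Eq$. To verify $WOE$, note that every pair $i, j$ has $a_i = a_j$, so the premise $a_i \neq a_j$ of the $WOE$ implication never fires and the condition holds vacuously. To verify $Eq$, observe that every player incurs load $v_{a_i} = W$, so equal-weight players trivially incur equal cost. This witness can be written down in $O(n)$ time and requires no sorting, which is precisely where the improvement from $O(n \log n)$ to $O(n)$ comes from.

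There is essentially no hard obstacle here; the statement is a consolidation of facts assembled in the preceding subroutine discussions. The only point that needs a little care is justifying the sharper $O(n)$ bound rather than the generic $O(n \log n)$: I must argue that producing the single-resource assignment genuinely avoids sorting $\ww$, since the correctness of that witness (vacuous $WOE$, constant load $W$ for $Eq$) does not depend on the ordering of the weights. Once that is noted, all four bounds follow directly.
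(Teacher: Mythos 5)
Your proposal is correct and follows essentially the same route as the paper: the all-players-on-one-resource witness (with $WOE$ vacuous and $Eq$ holding since every player incurs cost $W$) gives the $O(n)$ bound without sorting, and invoking \Cref{thm:WOE_algo_correctness} with $t = \infty$ handles the two $Cr$ variants in $O(n \log n)$. No gaps.
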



\subsection{Strong Monotonicity}

We first notice that an $SM$-assignment is always contiguous: 
if it was not, then under any non-increasing ordering of players and resources we can find some $i, j \in [n]: w_i < w_j$ and $a_i < a_j$, which contradicts $SM$.

\begin{restatable}{lemma}{LemSMContiguous} \label{lem:SM_contiguous}
    Every $SM$-assignment is contiguous. 
\end{restatable}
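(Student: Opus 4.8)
The plan is to prove the statement by a direct, constructive argument: I will exhibit one non-increasing weight ordering of the players under which the assignment is contiguous. This is exactly what the definition asks for, since the tie-breaking among equal-weight players is ours to choose, and the claim reduces to producing a suitable order.

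The first step, which I regard as the crux, is the observation that $SM$ forces every resource to hold players of a single weight. Indeed, if $a_i = a_j$ then $v_{a_i}$ and $v_{a_j}$ are the load of one and the same resource, so $v_{a_i} = v_{a_j}$; were $w_i \neq w_j$, say $w_i < w_j$, then $SM$ would require $v_{a_i} < v_{a_j}$, a contradiction. Hence $a_i = a_j \implies w_i = w_j$. I expect to state this as the opening line of the proof, since everything else follows from it mechanically.

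With this in hand I would build the ordering as follows. Sort the players by non-increasing weight, breaking ties so that players sharing a resource are listed consecutively (for instance, order equal-weight players by the index of their assigned resource). This is a legitimate ordering with $i < j \implies w_i \geq w_j$. To verify contiguity, take any $i < j$ with $a_i = a_j =: x$. By the observation $w_i = w_j =: w$, and since the ordering is non-increasing, every $k \in [i:j]$ also has weight $w$. Moreover all players on resource $x$ have weight $w$, so they lie entirely inside the block of weight-$w$ players, where the tie-breaking rule places the players of $x$ in a consecutive sub-block that contains both $i$ and $j$; therefore $a_k = x = a_i$ for every $k \in [i:j]$, which is precisely the contiguity condition.

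The one genuinely delicate point is the reading of the definition of contiguity: because equal-weight players may be permuted within a non-increasing ordering, contiguity must be understood as the \emph{existence} of such an ordering, so the proof has to construct one rather than reason about all of them. The single-weight-per-resource property supplied by $SM$ is exactly what makes this tie-breaking succeed, so once it is recorded the remainder is a short verification rather than a real obstacle. For completeness one may additionally note that $SM$ strictly orders the resources by load — any player on a lighter-weight resource incurs strictly less load than any player on a heavier-weight one — which furnishes the canonical ordering of the resources, although this is not needed to establish contiguity itself.
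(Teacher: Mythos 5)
Your proof is correct, but it takes a genuinely different route from the paper's. The paper argues by contradiction in two lines: if the assignment were not contiguous, then under any non-increasing indexing of players and resources one could find $i,j$ with $w_i < w_j$ and $a_i < a_j$, whence $v_{a_i} \geq v_{a_j}$, contradicting $SM$. Your argument is instead constructive: you first establish that $SM$ forces every resource to hold players of a single weight, and then exhibit an explicit non-increasing ordering (tie-breaking equal-weight players by their assigned resource) that is verified to be contiguous. Two remarks on the comparison. First, your key observation is precisely what the paper states as a separate result immediately after this lemma (\Cref{lem:SM_no_sharing_between_diff_weights}), so your route proves both statements at once, whereas the paper keeps them independent. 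Second, your approach is arguably more careful about the definitional subtlety you flag: contiguity is an existential statement over orderings, and the paper's contradiction argument silently assumes that non-contiguity yields a ``crossing'' pair under \emph{every} admissible ordering, a step it does not spell out; your construction sidesteps this entirely by producing the witness ordering. The trade-off is length -- the paper's proof is shorter -- but yours is self-contained and makes the tie-breaking issue explicit rather than implicit.
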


Another observation we can make is that in an $SM$-assignment, all players sharing the same resource must have the same weight; otherwise these players would incur the same cost, which fails the $SM$ condition for the lighter players.

\begin{lemma} \label{lem:SM_no_sharing_between_diff_weights}
    Let $\ba$ be an $SM$-assignment.
    If $a_i = a_j$ for players $i, j\in [n]$, then $w_i = w_j$. 
\end{lemma}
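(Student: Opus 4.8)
The plan is to argue by contradiction directly from the definition of $SM$, using one elementary observation: two players assigned to the same resource necessarily incur the same cost. Suppose toward a contradiction that there exist players $i, j \in [n]$ with $a_i = a_j$ yet $w_i \neq w_j$. Without loss of generality, assume $w_i < w_j$.

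The first step is to record that, since $a_i = a_j$, both players are assigned to the very same resource, and so incur exactly the same cost: $v_{a_i} = v_{a_j}$. This holds because the load $v_x(\ba)$ of a resource depends only on the resource $x$ (being the sum of the weights of all players assigned to it), not on which of its players we evaluate it at. Thus the hypothesis $a_i = a_j$ is promoted to an equality of incurred costs.

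The second step is to invoke the $SM$ condition on the pair $(i,j)$. Since we assumed $w_i < w_j$, strong monotonicity demands the \emph{strict} inequality $v_{a_i} < v_{a_j}$. This contradicts the equality $v_{a_i} = v_{a_j}$ obtained in the first step, so the assumption $w_i \neq w_j$ cannot hold, and therefore $w_i = w_j$, as claimed.

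I expect essentially no obstacle here: the statement is immediate once we note that co-located players share a common load, and the asymmetry of the $SM$ condition (phrased only for $w_i < w_j$) is handled cleanly by the ``without loss of generality'' reduction, which lets a single application of the condition cover both orderings of the unequal weights.
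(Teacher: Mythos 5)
Your proof is correct and matches the paper's own (very brief, inline) justification exactly: both argue that two co-located players of different weights would incur the same cost, contradicting the strict inequality that $SM$ demands for the lighter player. No issues.
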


In the remainder of this section, we discuss and present the relevant subroutines of the SCA algorithm for each property containing $SM$, to support the following theorem. 

\begin{restatable}{theorem}{TheoremSM}\label{thm:SM_algo_correctness}  
    \Pmksp{} and \Psat{} can be solved in time $O(n \log n)$ for $P \in \{SM, SM \land Eq, SM \land Cr, SM \land Eq\land Cr\}$. 
\end{restatable}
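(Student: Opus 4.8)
The plan is to instantiate the SCA framework of \Cref{alg:generic}. By \Cref{lem:SM_contiguous} every $SM$-assignment is contiguous, and by \Cref{lem:SM_no_sharing_between_diff_weights} each resource carries players of a single weight; hence in the canonical ordering an $SM$-assignment is a sequence of monochromatic ``blocks'' of non-increasing weight whose loads are \emph{strictly} decreasing across distinct weights (makespan being the load of the heaviest block). This reduces each of the four problems to specifying how many players \textsc{First\_k} and \textsc{Subsequent\_k} place on the next resource, and to proving that the resulting single candidate is a $P$-assignment of makespan at most $t$ whenever one exists.

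First I would treat $SM$ and $SM \land Eq$, where I expect a greedy sweep to be optimal. Processing weights heaviest-first, I maintain the load $\ell$ of the previous block as the strict upper bound for the current block (initialised to $t$). For a weight $w$ with $c$ remaining copies I place the \emph{largest feasible load per resource}: for $SM \land Eq$, the largest $p$ dividing $\#(w)$ with $pw \le t$ and $pw < \ell$ (so all resources of weight $w$ share the load $pw$, giving $Eq$, as in \Pmksp[WOE\land Eq]); for pure $SM$, where equal loads within a block are not required, I instead split the $c$ copies into $\lceil c/b\rceil$ resources with $b=\min(c,\lfloor(\ell-1)/w\rfloor,\lfloor t/w\rfloor)$ and balance them. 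The key claim is a dominance argument: among all admissible splits of a block, the maximal-divisor (resp.\ balanced, minimal-resource) choice \emph{simultaneously} minimises the number of resources it consumes and maximises its minimum load, so it relaxes the bound $\ell$ for the next block as much as possible. Consequently, if any $SM$- (resp.\ $SM \land Eq$-) assignment of makespan at most $t$ fits into $m$ resources, so does the greedy one; the algorithm returns \outpt{None} exactly when some block cannot be placed strictly below $\ell$ or the resources run out (in particular when $|\supp(\ww)|>m$).

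For the credible variants $SM \land Cr$ and $SM \land Eq \land Cr$ I would add the requirement from $Cr$ that no player gains by moving to the least loaded resource: writing $v_{\min}$ for the global minimum load, this is $v_x - w \le v_{\min}$ for every resource $x$ of weight $w$, i.e.\ no block's load may fall too far below the others. Combined with the strict decrease forced by $SM$, this traps each block's load between a lower bound (from $Cr$, of the same flavour as the first inequality of \Cref{eqn:WOE_Cr_bounds_on_remainders}) and an upper bound (from $SM$), which I expect to leave at most one admissible value of $k$ per step by a \Cref{lem:unique_k}-style argument; this yields the uniqueness recorded in the table. Since the necessary local conditions are not obviously sufficient, I finish with a \textsc{Satisfies} pass that verifies global $Cr$ (and $Eq$ for the corresponding variant); because loads are monotone across blocks, it suffices to test the heaviest player of each block against $v_{\min}$, i.e.\ $O(m)=O(n)$ checks.

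Finally, for correctness and the $O(n\log n)$ bound, the sweep is dominated by the initial sort: with prefix sums each resource is produced in amortised $O(1)$ time, and the divisor search needed for $Eq$ costs $O(\sqrt{\#(w)})$ per weight, which sums to $O(n)$; \Psat{} is the special case $t=\infty$. I expect the main obstacle to be the dominance argument for pure $SM$ and $SM \land Eq$ — precisely, showing there is \emph{no} tradeoff between spending few resources now and leaving room later — together with confirming that the two-sided bounds in the $Cr$ variants genuinely single out a unique candidate and that global $Cr$ reduces to the claimed handful of local checks.
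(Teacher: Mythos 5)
Your proposal follows essentially the same route as the paper: it instantiates the SCA framework of \Cref{alg:generic} using \Cref{lem:SM_contiguous,lem:SM_no_sharing_between_diff_weights} to reduce to monochromatic contiguous blocks, places each weight class greedily on the minimum number of resources (largest admissible divisor for the $Eq$ variants, balanced minimal-resource split for pure $SM$), pins down a single candidate per step via the two-sided bounds of \Cref{eqn:WOE_Cr_bounds_on_remainders} for the $Cr$ variants with a final $O(n)$ global \textsc{Satisfies} check, and obtains $O(n \log n)$ dominated by sorting. Your explicit dominance argument --- that the greedy split simultaneously minimises resources consumed and maximises the minimum block load, so no tradeoff exists --- is a cleaner articulation of the completeness step that the paper's proof only sketches.
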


\subsubsection{\Pmksp[SM]}

Given \Cref{lem:SM_contiguous,lem:SM_no_sharing_between_diff_weights}, 
a natural way of searching for an assignment is to look for its canonical ordering, and sequentially assign all players of a given weight to as few resources as possible while ensuring $SM$-satisfaction. 
 
\begin{restatable}{algorithmic}{SMFirstk}
    \Procedure{First\_k}{$P = SM, t, \ww, m$}
        \If { $\#(w_1) \cdot w_1 \leq t$} 
            \State \Return  $\#(w_1)$    
        \Else{}
            \State Compute $\hat k = \max \{k : k \cdot w_1 \leq t\}$
            \State Set $r = \lceil \frac{\#(w_1)}{\hat k} \rceil$
            \State \Return  $\lceil \#(w_1)/r \rceil$
        \EndIf
    \EndProcedure
\end{restatable}

In \textsc{First\_k}, if $\#(w_1)$ is too large for all players of weight $w_1$ to be assigned to the first resource, then we first calculate the minimum number of resources required to accommodate all $\#(w_1)$ players, then distribute these players as evenly as possible, to ensure that the last resource to which players of weight $w_1$ are assigned has the highest possible load.
We do this to ensure that the new upper-bound on the load of the next resource is as high as possible.

\begin{restatable}{algorithmic}{SMSubsequentk}
    \Procedure{Subsequent\_k}{$P = SM, t, \ww, m, z$}
        \State Set $z$ to be the weight of players in previous resource
        \If { $z = w_1$} \Comment{Load must be $\leq t$}
            \If { $\#(w_1) \cdot w_1 \leq t$}
                \State \Return $\#(w_1)$    
            \Else{}
                \State Compute $\hat k = \max \{k : k \cdot w_1 \leq t\}$
                \State Set $r = \lceil \frac{\#(w_1)}{\hat k} \rceil$
                \State \Return $\lceil \#(w_1)/r \rceil$
            \EndIf
        \Else {} \Comment{Load must be $< t$}
            \If { $\#(w_1) \cdot w_1 < t$}
                \State \Return $\#(w_1)$    
            \Else{}
                \State Compute $\hat k = \max \{k : k \cdot w_1 < t\}$
                \State Set $r = \lceil \frac{\#(w_1)}{\hat k} \rceil$
                \State \Return $\lceil \#(w_1)/r \rceil$
            \EndIf
        \EndIf
    \EndProcedure
\end{restatable}

We use a similar process for \textsc{Subsequent\_k}, except that an additional check occurs on whether the current first player has the same weight as the players assigned to the previous resource. If so, then the load threshold is a weak inequality, since players of the same weight can incur the same cost. If not, then the load threshold is a strong inequality, to ensure that the players of the lighter weight incur strictly smaller cost than the heavier players.

Finally, the subroutines above ensure that there are no additional checks to be made, so the \textsc{Satisfies} subroutine simply returns true.

\subsubsection{\Pmksp[SM \land Eq]}

To search for a solution to \Pmksp[SM \land Eq], we need to additionally ensure that all players of the same weight incur exactly the same cost. This means that they must be distributed evenly across a number of resources (rather than the minimum number of resources), as shown in the following subroutines.

\begin{restatable}{algorithmic}{SMEqFirstk}
    \Procedure{First\_k}{$P = SM \land Eq, t, \ww, m$}
        \If { $\#(w_1) \cdot w_1 \leq t$}
            \State \Return $\#(w_1)$    
        \Else{}
            \State \Return  $\max \{k : (k \cdot w_1 \leq t) \land (k \mid \#(w_1)) \}$
        \EndIf
    \EndProcedure
\end{restatable}

\begin{restatable}{algorithmic}{SMEqSubsequentk}
    \Procedure{Subsequent\_k}{$P = SM \land Eq, t, \ww, m, z$}
        \State Let $z$ be the weight of players in previous resource
        \If { $z = w_1$} \Comment{Load must be $\leq t$}
            \If { $\#(w_1) \cdot w_1 \leq t$}
                \State \Return  $\#(w_1)$    
            \Else{}
                \State \Return $\max \{k : (k \cdot w_1 \leq t) \land (k \mid \#(w_i)) \}$
            \EndIf
        \Else{} \Comment{Load must be $< t$}
            \If { $\#(w_1) \cdot w_1 < t$}
                \State \Return $\#(w_1)$    
            \Else{}
                \State \Return $\max \{k : (k \cdot w_1 < t) \land (k \mid \#(w_i)) \}$
            \EndIf
        \EndIf
    \EndProcedure
\end{restatable}

No further checks are required, so the \textsc{Satisfies} subroutine always returns \outpt{True}. 


\subsubsection{\Pmksp[SM \land Cr]}

To satisfy $Cr$ we cannot simply assign the maximal number of players to each resource, as doing so might result in a resource being unused or having a low load, and thus make it beneficial for some players to deviate to this resource. 
We need to ensure that, when choosing the number of players $k$ to assign to each resource, that the remaining players can be distributed so that all subsequent resources have sufficiently high load. 

To ensure that an assignment satisfies $Cr$, we must impose the same bounds shown in \Cref{eqn:WOE_Cr_bounds_on_remainders} at each iteration of the assignment. 
As we have shown in \Cref{prop:WOE_Cr_uniqueness}, at most one value of $k$ exists at each iteration, so there is a unique candidate solution which we can find and check for the \Pmksp[SM \land Cr] problem. 




Since players sharing the same resource must be of the same weight, we can re-write \Cref{eqn:WOE_Cr_bounds_on_remainders} as $(k-1) w_1 \leq \frac{W - k w_1}{m-1} \leq k w_1$ where $k$ is the number of players of the weight $w_1$ assigned to the resource. 
The only integer value that $k$ can take is $\lceil \frac{W}{m w_1} \rceil$, if it exists, and we can check if it is less than or equal to $\#(w_1)$, the number of remaining players of weight $w_1$.\footnote{Rearranging the bounds gives $\frac{W}{m w_1} \leq k \leq \frac{W}{m w_1} + \frac{m-1}{m}$.}

Repeating this step for each resource eventually yields the result that either there is no satisfying assignment, or that all players are assigned, at which point we can check whether the candidate assignment satisfies $Cr$ overall.
This can be done by checking whether each player has the incentive to deviate to the least-loaded resource.\footnote{It is in fact sufficient to check that the heaviest player assigned to each resource has no incentive to deviate to the least-loaded resource.}
These steps are shown in the following subroutines to \Cref{alg:generic}.

\begin{restatable}{algorithmic}{SMCrFirstk}
    \Procedure{First\_k}{$P = SM \land Cr, t, \ww, m$}
        \If{ $\exists k : (k-1) w_1 \leq \frac{W - k w_1}{m-1} \leq k w_1 \leq t \land k \leq \#(w_1)$}
            \State \Return $k$      
        \Else{}
            \Return  \outpt{None}
        \EndIf
    \EndProcedure
\end{restatable}

\begin{restatable}{algorithmic}{SMCrSubsequentk}
    \Procedure{Subsequent\_k}{$P = SM \land Cr, t, \ww, m, z$}
    \State Let $z$ be the weight of players in previous resource
    \If { $z = w_1$} \Comment{Load must be $\leq t$}
        \If { $m = 1$ } \Comment{Final resource}
            \If {$w_1 = w_{|\ww|}$}
                \State \Return $|\ww|$
            \Else{}
                \Return \outpt{None}
            \EndIf
        \EndIf
        
        \If { $\exists k : (k-1) w_i \leq \frac{W - k w_i}{m-1} \leq k w_i \leq t \land k \leq \#(w_1)$}
            \State \Return $k$          
        \Else{}
            \Return  \outpt{None}
        \EndIf
    \Else {} \Comment{Load must be $< t$}
        \If { $m = 1 $} \Comment{Final resource}
            \If { $w_1 = w_{|\ww|} \land W[1:|\ww|] < t$}
                \State \Return $|\ww|$
            \Else {}
                \Return  \outpt{None}
            \EndIf
        \EndIf
        
        \If { $\exists k : (k-1) w_1 \leq \frac{W - k w_1}{m-1} \leq k w_1 < t \land k \leq \#(w_1)$}    
            \State \Return $k$      
        \EndIf
        \Return  \outpt{None}    
    \EndIf
    \EndProcedure
\end{restatable}

\begin{restatable}{algorithmic}{SMCrSatisfies}
    \Procedure{Satisfies}{$\ba, P = SM \land Cr$}
        \For {$i \in [n]$}
            \If {$v_{a_i} > v_m + w_i$} \Comment{Fails Cr}
                \State \Return  \outpt{None}
            \EndIf
        \EndFor
        \State \Return  \outpt{True}
    \EndProcedure
\end{restatable}



\subsubsection{\Pmksp[SM \land Eq \land Cr]}

Since the algorithm for \Pmksp[SM \land Cr] produces a single feasible candidate solution (if it exists), we simply need to additionally check for $Eq$ in the final part of the algorithm. This can be done in $O(n)$ by checking whether all players with the same weight incur the same cost.





\subsection{$\boldsymbol{WOE \land SM}$}

For completeness, we briefly discuss properties containing $WOE \land SM$. For $WOE \land SM$, we can base our subroutines on those for $SM$, with the added requirement of no weak ordered envy between players in consecutive resources. 
A similar argument applies for $WOE \land SM \land Eq$. For $WOE \land SM \land Cr$ and $WOE \land SM \land Eq \land Cr$, we can use the corresponding \textsc{First\_k} and \textsc{Subsequent\_k} of either the $WOE$ version or the $SM$ version, and simply check for satisfaction of the other condition in the subroutine \textsc{Satisfies}. 
This is sufficient due to the uniqueness of a satisfying assignment, if one exists. 
In all cases, a solution is found in time $O(n \log n)$. For brevity, we omit these results from \Cref{tab:summary}.

\subsection{Weak Monotonicity}

\Pmksp{} when $P$ contains $WM$ is generally not tractable, except for the special cases of \Psat[WM] and \Psat[WM \land Eq], where the simple assignment of placing all players on the same resource trivially satisfies $WM$ and $Eq$. Discussion on the intractability of \Pmksp[WM] is contained in the next section.


\section{Intractable Properties} \label{sec:intractable}

Finally, we discuss properties $P$ for which \Pmksp{} is intractable. 
In contrast to the previous section, these problems are more difficult 
because a satisfying assignment of the lowest makespan may not be contiguous, as the example below shows for the case of $WM$.

\begin{example} 
    Consider the instance with $\ww = (4, 3, 2, 1)$ and $m = 2$. The assignment $\ba$ given by $l(\ba) = \{\{4,1\},\{3,2\}\}$ is of the lowest makespan of~$5$, and satisfies $WM$ since all players incur the same cost. It is, however, not contiguous. The makespan-optimal, contiguous $WM$-assignment is $\{\{4,3\},\{2,1\}\}$, with a makespan of~$7$.
\end{example}

Our reductions are from the known NP-hard decision problem, \textsc{Partition} \cite{Karp1972}. 
%
An instance is given by a vector $\bs = (s_1, ..., s_n)$ of positive integers. It is a Yes-instance if there exists a subset $S \subset [n]$ such that $\sum_{i \in S} s_i =  \sum_{i \in [n] \setminus S} s_i$.

The key of our reduction is to set $S = \sum_i s_i$, and to define the reduced instance 
by $\ww = (\bs, S+1,S+1)\in \mathbb Z_{> 0}^{n+2}$ 
and $m = 2$.
For makespan minimization, we additionally set $t = \frac{3}{2}S + 1$.
Note that hardness for satisfiability of $P$ immediately implies hardness of makespan minimization (by setting $t$ to be sufficiently large) but our second result also covers properties for which we can efficiently solve \Psat.


\begin{restatable}{theorem}{TheoremSatisfaction} \label{thm:reduction_satisfiability}
    Deciding whether a $P$-assignment exists is NP-complete 
    for any property $P$ in
    $\{Eq\land Cr, WM \land Cr, WM \land Eq\land Cr \}$.
\end{restatable}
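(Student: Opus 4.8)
The plan is to establish NP-completeness via the reduction from \textsc{Partition} described just before the statement, treating the three properties uniformly. Membership in NP is immediate: given a candidate assignment $\ba$, each of $Eq$, $Cr$, and $WM$ is a condition quantified over pairs of players (and machines), so it—and any conjunction of these—can be verified in polynomial time. The substance is hardness, for which I would work with the reduced instance $\ww = (\bs, S+1, S+1)$, $m = 2$, with total weight $W = 3S+2$. Write $h_1, h_2$ for the two ``heavy'' players of weight $S+1$, and note that every light player satisfies $s_i \le S < S+1$, hence is \emph{strictly} lighter than $h_1$ and $h_2$; this strictness is what makes the monotonicity constraints bite.

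First I would prove the forward direction. If \textsc{Partition} is a Yes-instance, pick $A \subseteq [n]$ with $\sum_{i \in A} s_i = S/2$, and assign $h_1$ together with $\{s_i : i \in A\}$ to machine $1$ and $h_2$ together with the remaining light players to machine $2$. Then $v_1 = v_2 = (S+1) + S/2$, so \emph{every} player incurs the same cost. Consequently $Eq$ holds (equal weights trivially yield equal cost), $WM$ holds (all costs coincide), and $Cr$ holds (for a player on machine $1$, the condition $v_1 \le v_2 + w_i$ reduces to $0 \le w_i$). Thus this single balanced assignment simultaneously witnesses $Eq \land Cr$, $WM \land Cr$, and $WM \land Eq \land Cr$.

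The core is the backward direction. Suppose a $P$-assignment exists for one of the three properties; since all three contain $Cr$, I would first show that $Cr$ forces $h_1, h_2$ onto different machines. If both sit on one machine, that machine has load at least $2(S+1)$ while the other has load at most $S$, so the load gap is at least $S+2 > S+1 = w_{h_1}$; a heavy player would then strictly benefit from deviating, contradicting $Cr$. With $h_1, h_2$ separated, say on machines $1$ and $2$, I would force $v_1 = v_2$. For $Eq \land Cr$ this is immediate: $h_1, h_2$ have equal weight and so must incur equal cost, i.e.\ $v_1 = v_2$. For $WM \land Cr$ I would argue that light players must occupy both machines—if they all sat on one machine, a light player there, being strictly lighter than the heavy player on the other machine, would force $v_1 \le v_2$, yet $v_1 = 2S+1 > S+1 = v_2$, a contradiction—after which comparing a light player on machine $1$ with $h_2$ gives $v_1 \le v_2$, and a light player on machine $2$ with $h_1$ gives $v_2 \le v_1$, so $v_1 = v_2$. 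For $WM \land Eq \land Cr$ either argument applies. In every case $v_1 = v_2$ means the light players split into two sets of equal weight $S/2$, a Yes-instance of \textsc{Partition}.

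I expect the backward direction, and specifically the $WM \land Cr$ case, to be the main obstacle: unlike $Eq$, weak monotonicity does not directly equate the costs of the two heavy players, so one must separately exclude the degenerate configuration in which all light players pile onto a single machine, and then exploit the strict inequality $s_i < S+1$ to pin both loads to equality. A minor point to check is the edge case $n = 1$ (a singleton \textsc{Partition} instance, which is necessarily a No-instance), where the same load-gap computation rules out any valid $WM \land Cr$-assignment, keeping the equivalence intact.
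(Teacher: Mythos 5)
Your proposal is correct and follows essentially the same route as the paper: the identical \textsc{Partition} reduction with $\ww = (\bs, S+1, S+1)$, $m=2$, the same forward argument via the balanced assignment, and the same backward case analysis ($Cr$ forces the two heavy players onto different machines; then $Eq$, or $WM$ applied between a light player and the heavy player on the other machine, forces equal loads). Your handling of $WM \land Cr$ (first excluding the configuration with all light players on one machine, then the two-sided comparison) is just a slight rearrangement of the paper's argument that the heavier-loaded machine must contain a light player, and your added attention to the $n=1$ edge case is a harmless refinement.
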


\begin{restatable}{theorem}{TheoremDecision} \label{thm:reduction_decision_problem}
    Deciding whether a $P$-assignment satisfying a given makespan threshold exists is NP-complete 
    for any property $P$ in
    $\{\top, Eq, Cr, Eq \land Cr, WM, WM \land Eq, WM \land Cr, WM \land Eq \land Cr\}$.
\end{restatable}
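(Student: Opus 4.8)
The plan is to prove both membership in NP and NP-hardness, using a single reduction from \textsc{Partition} that settles all eight properties at once. Membership is routine: an assignment $\ba$ is a polynomial-size certificate, and for each $P$ in the list the defining condition of $P$ is a finite conjunction of inequalities over players and resources, so both $c(\ba) \le t$ and the satisfaction of $P$ can be checked in polynomial time. The content lies in hardness, for which I would adopt the construction already fixed in the text: from a \textsc{Partition} instance $\bs$ with total $\Sigma := \sum_i s_i$ (assuming without loss of generality that $\Sigma$ is even, as an odd total makes the instance trivially a No-instance), set $\ww = (\bs, \Sigma+1, \Sigma+1)$, $m = 2$, and $t = \tfrac{3}{2}\Sigma + 1$. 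The organizing observation is that $t$ is exactly half of the total weight $W = 3\Sigma + 2$; this tight threshold is what lets one reduction cover every property.

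I would begin with the reverse direction, since it is entirely property-agnostic. Suppose a $P$-assignment of makespan at most $t$ exists. Because $m = 2$ and $t = W/2$, both resources must carry load exactly $t$. The two heavy players cannot share a resource, since together they already contribute $2(\Sigma + 1) > t$; hence they lie on different resources, and the light players split into two groups, each summing to $t - (\Sigma + 1) = \tfrac{1}{2}\Sigma$. This is precisely a solution to \textsc{Partition}. Crucially, the argument never appeals to $P$, so it applies verbatim to every property in the list, $\top$ included.

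For the forward direction I would produce one witness that satisfies the strongest property $WM \land Eq \land Cr$; since each of the eight listed properties is a weakening of $WM \land Eq \land Cr$ (obtained by omitting conjuncts, with $\top$ omitting all of them), the same witness serves for all cases. Given a \textsc{Partition} solution, I place one heavy player and one of the two balanced light groups on each resource, so that both loads equal $t$. The verification to make is that this balanced assignment satisfies $WM \land Eq \land Cr$: every player incurs cost exactly $t$, so $Eq$ and $WM$ hold vacuously (all costs coincide), while $Cr$ holds because any unilateral deviation to the other resource would cost $t + w_i \ge t$. I would also confirm that the reduced instance meets the standing assumptions—it has $n + 2 > 2$ players, $m = 2 > 1$, and $\max_i w_i = \Sigma + 1 \le t$.

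The step deserving the most care, and the conceptual crux, is establishing that feasibility of the makespan bound collapses the fairness and credibility constraints to triviality: the tightness $t = W/2$ forces equal loads, which in turn makes $Eq$, $WM$, and $Cr$ impose nothing on any feasible assignment. Once this collapse is made precise, the reduction becomes property-agnostic in both directions, and the role of $P$ reduces to being satisfied for free by the balanced witness. The only remaining technicality is the parity of $\Sigma$, which I dispatch with the standard assumption that the total is even.
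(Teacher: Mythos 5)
Your proposal is correct and follows essentially the same reduction as the paper: the same \textsc{Partition} construction with $\ww = (\bs, S+1, S+1)$, $m=2$, $t = \tfrac{3}{2}S+1$, the same balanced witness for the forward direction (the paper likewise notes all players incur equal cost so every listed property holds), and the same reverse argument that the heavy players must be separated and the light players split evenly. Your explicit observation that $t = W/2$ forces both loads to equal $t$, and your framing via the strongest property $WM \land Eq \land Cr$, merely make precise what the paper states more briefly.
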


Note that an assignment need not be unique if it exists. For example, the two assignments $\ba$ and $\ba'$ where $l(\ba) = \{\{2,1,1\},\{2,1,1\}\}$ and $l(\ba') = \{\{2,2\},\{1,1,1,1\}\}$ are clearly not equivalent, and both satisfy $WM \land Eq \land Cr$.

Finally, let us reflect on why the same reduction for NP-hardness does not work for properties containing $WOE$, $SM$, or $EF$. These properties are not satisfied whenever players of different weights (say $w_i < w_j$) are assigned to different resources $a_i \neq a_j$ of the same load $v_{a_i} = v_{a_j}$, because this implies that $v_{a_i} \not < v_{a_j}$ (which fails $SM$) and $v_{a_i} - w_i > v_{a_j} - w_j$ (which fails $WOE$ and $EF$). A Yes-instance of \textsc{Partition} would, therefore, not result in a Yes-instance of \Pmksp{} under the same reduction.

\section{Conclusion}
\label{sec:conclusion}

In the context of equilibrium refinement, the problem of finding a `fair' Nash equilibrium can be phrased as solving the \Psat{} problem where $P$ contains $Cr$ and some additional fairness properties. This problem turned out to be tractable for a large number of fairness classes. 
Even beyond equilibria, this paper shows that the intractability of scheduling on identical machines is lifted when imposing additional, desirable constraints.
The reason is that these constraints impose important structure to the solutions (i.e., contiguity), which sufficiently reduces the search space.

Many research questions follow naturally from our results. 
What is the price of anarchy for the fair coordination problem? For problems where the solution to \Psat{} is guaranteed to exist, how does its optimal solution compare to that of the unconstrained case? This gives a measure of the efficiency-fairness trade-off. In the context of communication partition, the tractable fairness conditions are useful for efficiently finding the optimal partition, by removing the combinatorial blow-up at the coalitional level. However, there may still exist combinatorial difficulty at the partition level, so further investigation is necessary. 
What is the computational complexity of finding the optimal partition associated with these properties?





\bibliographystyle{AAMAS/ACM-Reference-Format} 
\bibliography{references}


\clearpage

\appendix
\section*{Appendix}

\section{Deferred Proofs} \label{sec:deferred_proofs}



\subsection{Property Relationships}

\PropRelations*

\begin{proof}
    
    ($EF \implies Eq$): In the case where $w_i = w_j$, $EF \implies v_{a_i} = v_{a_j} \implies Eq$.
    
    ($EF \implies WOE$): For any $i,j : w_i < w_j$, $EF \implies v_{a_i} - w_i = v_{a_j} - w_j$, which satisfies $WOE$. 
    
    ($WOE \implies WM$): For any $i,j : w_i < w_j$, if $a_i \neq a_j$, then $WOE \implies v_{a_i} - w_i \leq v_{a_j} - w_j \implies v_{a_i} < v_{a_j}$, which satisfies $WM$; if $a_i = a_j$, then $v_{a_i} = v_{a_j}$, which also satisfies $WM$.
    
    ($EF \implies SM$): For any $i,j : w_i < w_j$, $EF \implies v_{a_i} - w_i = v_{a_j} - w_j \implies v_{a_i} < v_{a_j}$, which satisfies $SM$. 
    ($SM \implies WM$): For any $i,j : w_i < w_j$, $SM \implies v_{a_i} < v_{a_j}$, which satisfies $WM$.
    
    ($WOE \land Eq \implies OE$): For any $i,j \in [n]$, consider two cases. If $w_i = w_j$, then $Eq \implies v_{a_i} = v_{a_j}$ which satisfies $OE$; if $w_i < w_j$, then $WOE \implies v_{a_i} - w_i = v_{a_j} - w_j$, which satisfies $OE$.
    
    ($OE \implies WOE \land Eq$): $OE$ implies that for any $i, j \in [n]: w_i = w_j$, $v_{a_i} - w_i \leq v_{a_j} - w_j$ and $v_{a_j} - w_j \leq v_{a_i} - w_i$, which together implies $v_{a_i} = v_{a_j}$, so $WOE \land Eq$ is satisfied. For any $i, j \in [n]:w_i < w_j$, $WOE \land Eq$ is clearly satisfied.
    
    ($WM \land Eq \implies M$): For any $i,j \in [n]$, consider two cases. If $w_i = w_j$, then $Eq \implies v_{a_i} = v_{a_j}$ which satisfies $M$; if $w_i < w_j$, then $WM \implies v_{a_i} \leq v_{a_j} - w_j$, which also satisfies $M$.
    
    ($M \implies WM \land Eq$): $M$ implies that for any $i, j \in [n]: w_i = w_j$, $v_{a_i} \leq v_{a_j} $ and $v_{a_j} \leq v_{a_i} $, which together implies $v_{a_i} = v_{a_j}$, so $WM \land Eq$ is satisfied. For any $i, j \in [n]:w_i < w_j$, $M \implies v_{a_i} \leq v_{a_j}$, which satisfies $WM \land Eq$.
\end{proof}

\subsection{Tractable Properties}

\LemWOEContiguous*
\begin{proof}
    Suppose on the contrary there exists a $WOE$-satisfying, non-contiguous assignment where, under any non-increasing ordering of players and resources, we can find some $i,j \in [n]$ such that $w_i < w_j$ and $a_i < a_j$ (i.e., there is no canonical ordering). Then $v_{a_i} \geq v_{a_j}$, and therefore $v_{a_i} - w_i > v_{a_j} - w_j$, which contradicts $WOE$.
\end{proof}


\LemWOEGlobal*

\begin{proof}
    Given a contiguous assignment $\ba$, consider three consecutive resources indexed $x$, $x+1$ and $x+2$ in its canonical ordering (i.e., $i < j \implies w_i \geq w_j$, and $v_x \geq v_{x+1} \geq v_{x+2}$), and assume that the $WOE$ condition holds between players assigned to any two adjacent resources. I.e., $v_{x+1} - \underline w_{x+1} \leq v_{x} - \overline w_{x}$ and $v_{x+2} - \underline w_{x+2} \leq v_{x+1} - \overline w_{x+1}$, where $\underline w_{x+1}$ and $\underline w_{x+2}$ are the lightest players assigned to $x+1$ and $x+2$ respectively, and $\overline w_{x}$ and $\overline w_{x+1}$ are the heaviest players assigned to $x$ and $x+1$ respectively. Because players are in canonical ordering, $\overline w_{x+1} \geq \underline w_{x+1}$, and therefore $v_{x+2} - \underline w_{x+2} \leq v_{x+1} - \overline w_{x+1} \leq v_{x+1} - \underline w_{x+1} \leq v_{x} - \overline w_{x}$, which means that no player assigned to $x+2$ weakly envies anyone assigned to $x$. In other words, in a canonical ordering, the $WOE$ condition is transitive, and therefore $WOE$ holds between all players across all resources.
\end{proof}


\TheoremWOE*
\begin{proof}
    We shows that \Cref{alg:generic}, combined with the relevant subroutines, solves \Pmksp{} for each property $P$ stated in the theorem. We do so by showing that the algorithm terminates, is sound (i.e., if it returns an assignment, then this assignment has the required property), is complete (i.e., if an assignment with the required property exists, then such an assignment is returned by the algorithm), and runs in time $O(n \log n)$ or $O(n^2)$ respectively. 

    Termination: In each case of $P$, each iteration of \textsc{Subsequent\_k} assigns players to a resource, or terminates if no feasible $k$ is found. Since there are finitely many players and resources, the algorithm always terminates. 

    For soundness, completeness, and running time, we consider each case of $P$:
    
    $\boldsymbol{P = WOE}$
    
    Soundness: By construction, at each iteration of the assignment, no player envies any player assigned to the previous resource. It is sufficient to check that the lightest player of the current assignment does not envy the heaviest player of the previous assignment. Notice also that the $WOE$ condition implies that the current assignment is such that its load less its lightest player is no greater than the load less its heaviest player in the previous assignment. This implies that the load less than its heaviest player in the current assignment is no greater than the load less its heaviest player in the previous assignment, and therefore the $WOE$-satisfaction condition is transitive across iterations, and that it is sufficient to check for $WOE$-satisfaction between consecutive resources. Also notice that the sequentially assigned load on resources are non-increasing due to $WOE$, so it is sufficient to check that the load of the first resource is no greater than the makespan threshold. 
    
    Completeness: Suppose for contradiction that the algorithm does not return an assignment when a solution exists, which we call $\ba$. We know that $\ba$ must be contiguous due to \Cref{lem:WOE_contiguous}. 
    Since the algorithm does not return an assignment, it had run out of available resources to assign players to. This implies $\ba$ must have assigned more players to some resources than the algorithm. However, no additional players could be assigned to the first resource due to the makespan constraint, and no more players could be assigned to each subsequent resource due to the $WOE$ constraint. 
    
    Running time: Each iteration of assigning players to the next resource takes time $O(k)$ to find the maximum number of players that can be assigned at the makespan threshold or to satisfy the $WOE$ condition, where $k$ is the number of players assigned. The sum of $k$ across all iterations is $n$, so the assignment part of the algorithm is in $O(n)$. This leaves the sorting of players as the dominant running time, at $O(n \log n)$.
    
    $\boldsymbol{P = WOE \land Eq}$
    
    Soundness: By construction, if the algorithm outputs an assignment, then the assignment is of makespan at most $t$ (the first and most-loaded resource satisfies this constraint), satisfies $WOE$ through the transitivity of $WOE$-satisfaction across iterations, and satisfies $Eq$ due to the fact that all players of the same weight are eventually distributed across resources. 
    
    Completeness: Suppose for contradiction that the algorithm does not return an assignment when a solution exists, which we call $\ba$. Because the algorithm only fails to return an assignment if it runs out of resources while there are still players to be assigned, it must be the case that $\ba$ assigns more players to some resource than the algorithm. This implies that one of the conditions for selecting $k$ is breached, and thus $\ba$ cannot satisfy $WOE \land Eq$.


    Running time: Because each iteration of assigning players to the next resource takes up to $O(n)$ operations (may require checking all $i \in [n]$ to find the maximum feasible $i$), the process of finding a candidate takes time $O(n^2)$. This dominates the running time of sorting players ($O(n \log n)$).

    $\boldsymbol{P = WOE \land Cr}$
    
    Soundness: By construction, if the algorithm outputs an assignment, then its makespan is at most $t$ (this is an explicit constraint on the first assigned resource, and holds implicitly through the $WOE$ bound for subsequent assignments), and has been checked for $WOE \land Cr$-satisfaction by the subroutine $Satisfies(a, P = WOE \land Cr)$. 
    
    Completeness Suppose for contradiction that the algorithm does not return an assignment when a solution exists, which we call $\ba$. Then $\ba$, in its canonical ordering, must assigned to some resource that does not correspond to the $k$ players prescribed by the algorithm. Consider the first resource (in canonical ordering) that fails this. The assignment of players to this resource fails to satisfying some necessary condition of $WOE$ or $Cr$, or have makespan greater than $t$, and therefore $\ba$ is not a solution to the problem.

    Running time: Because each iteration of assigning players to the next resource takes $O(k)$ operations (by calculating, for each value up to $k$, $W[1:k-1]$, $\frac{W - W[1:k]}{m-1}$ and $W[1:k]$), and that the sum of $k$s across iterations is $n$, the assignment part of the algorithm runs in time $O(n)$. This leaves the sorting of players as the dominant running time, at $O(n \log n)$.
    
    $\boldsymbol{P = WOE \land Eq \land Cr}$
    
    Soundness and completeness: This is essentially the same as the argument for $WOE \land Cr$, but with the added check for $Eq$.

    Running time: Compared to $WOE \land Cr$, the additional check for $Eq$ is in time $O(n)$, which again leaves the sorting of players as the dominant running time, at $O(n \log n)$.
\end{proof}

\LemWOEEq*
\begin{proof}
    Suppose for contradiction there exists some player $k$ s.t. $a_k = a_i$ and $w_k \neq w_i$, and let $v = v_{a_i} = v_{a_j}$ be the load on resources $a_i$ and $a_j$, which must be the same due to $Eq$. If $w_k > w_i$, then player $j$ would weakly envy $k$, since $v - w_k < v - w_j$, so $WOE$ is not satisfied. If $w_k < w_i$, then player $k$ would weakly envy $j$, since $v - w_j < v - w_k$, so $WOE$ is again not satisfied. 
\end{proof}


\LemUniquek*
\begin{proof}
    Suppose on the contrary that $k$ and $k + l$ both satisfy the bounds, where $l$ is a positive integer. Then by combining the two sets of bounds we have,
    \begin{align*} 
        &W[1:k-1] \leq \frac{W - W[1:k]}{m-1} \leq W[1:k] \\ 
        &\leq W[1:k+l-1] \leq \frac{W - W[1:k+l]}{m-1} \leq W[1:k+l]
    \end{align*}
    which cannot be true since $W[1:k] < W[1:k+1]$, which implies $\frac{W - W[1:k]}{m-1} > \frac{W - W[1:k+1]}{m-1}$. 
\end{proof}


\LemSMContiguous*

\begin{proof}
    Suppose on the contrary there exists an $SM$-assignment where, under any non-increasing ordering of players and resources, we can find some $i,j$ such that $w_i < w_j$ and $a_i < a_j$. Then $v_{a_i} \geq v_{a_j}$, which contradicts $SM$.
\end{proof}


\TheoremSM*

\begin{proof}
    We show that \Cref{alg:generic}, combined with the relevant subroutines, solves \Pmksp{} for each property $P$ stated in the theorem. We do so by showing that the algorithm terminates, is sound (i.e., if it returns an assignment, then this assignment has the required property), is complete (i.e., if an assignment with the required property exists, then such an assignment is returned by the algorithm), and runs in time $O(n \log n)$.

    Termination: In each case of $P$, each iteration of \textsc{Subsequent\_k} assigns some players to a resource, or terminates if no feasible $k$ is found. Since there are finitely many players and resources, the algorithm always terminates. 

    For soundness, completeness, and running time, we consider each case of $P$:
    
    $\boldsymbol{P = SM}$
    
    Soundness: By construction, at each iteration of the assignment, $SM$ is maintained by ensuring that consecutive resources have non-increasing loads if players across resources have the same weights, or decreasing loads if players across resources have different weights. And since players are ordered in non-increasing order of weight, $SM$ is transitive across resources. The makespan threshold is maintain by ensuring that the first (and thus most-loaded) resource has load at most $t$.
    
    Completeness: Suppose for contradiction that the algorithm does not return an assignment when a solution exists, which we call $\ba$. We know that $\ba$ must be contiguous due to \Cref{lem:SM_contiguous}. Since the algorithm does not return an assignment, it had run out of available resources to assign players to. This implies $\ba$ must have assigned more players to some resources than the algorithm. However, no additional player could be assigned to the first resource due to the makespan constraint, and no additional player could be assigned to each subsequent resource due to the $SM$ constraint. 
    
    Running time: Because each iteration of assigning players to the next resource takes time $O(k)$ to find the maximum feasible candidate, where $k$ sums to $n$ across iterations, the assignment part of the algorithm is in $O(n)$. This leaves the sorting of players as the dominant running time, at $O(n \log n)$.
    
    $\boldsymbol{P = SM \land Eq}$
    
    Soundness: Similar to case $P = SM$, by construction, if the algorithm outputs an assignment, then the assignment is of makespan at most $t$ and satisfies $SM$. Moreover, the additional assignment rule ensures that all players of the same weight incur the same cost, and thus satisfying $Eq$. 
    
    Completeness: Suppose for contradiction that the algorithm does not return an assignment when a solution exists, which we call $\ba$. Because the algorithm only fails to return an assignment if it runs out of resources while there are still players to be assigned, it must be the case that $\ba$ assigns more players to some resource than the algorithm. This implies that one of the conditions for selecting $k$ is breached, and thus $\ba$ cannot satisfy $SM \land Eq$.

    Running time: For each iteration of players assignment, the candidate $k$ can be computed by finding the largest divisor of $\#(w_i)$ that is below the threshold, which takes at most $O(\sqrt \#(w_i))$ steps. 
    Since $n = \sum_{i \in \supp(\ww)} \#(w_i)$, the total number of operations across all iterations is at most $O(n)$ (if $\#(w_i) = 1$ for all $w_i \in \supp(\ww)$).
    This leaves the sorting of players as having the dominant running time, at $O(n \log n)$.

    $\boldsymbol{P = SM \land Cr}$
   
    Soundness: By construction, if the algorithm outputs an assignment, then its makespan is at most $t$ (this is an explicit constraint on the first assigned resource, and holds implicitly through the $SM$ bound for subsequent assignments), and has been checked for $SM \land Cr$-satisfaction by the subroutine $Satisfies(a, P = SM \land Cr)$. 
    
    Completeness: Suppose for contradiction that the algorithm does not return an assignment when a solution exists, which we call $\ba$. Then $\ba$, in its canonical ordering, must assigned to some resource that does not correspond to the $k$ players prescribed by the algorithm. Consider the first resource (in canonical ordering) that fails this. The assignment of players to this resource fails to satisfying some necessary condition of $SM$ or $Cr$, or have makespan greater than $t$, and therefore $\ba$ is not a solution to the problem.

    Running time: Each iteration of assigning players to the next resource takes $O(1)$ operations, by checking if the candidate $k = \lceil W / m w_1 \rceil$ satisfies the required bounds, and thus the process of generating a candidate assignment takes time $O(n)$. This leaves the sorting of players as the dominant running time, at $O(n \log n)$.
    
    $\boldsymbol{P = SM \land Eq \land Cr}$
    
    Soundness and completeness: This is essentially the same as the argument for $SM$, but with the added check for $Eq$.

    Running time: Compared to $SM \land Cr$, the additional check for $Eq$ is in time $O(n)$, which again leaves the sorting of players as the dominant running time, at $O(n \log n)$.

\end{proof}

\subsection{Intractable Properties}

In this section, we provide our hardness results.
Note that for any property and conjunction of properties, membership in NP is straightforward.
A $P$-assignment can be verified in polynomial time by comparing weights and loads on assigned resources for any pair of players.
Hence, we omit this from the proofs and focus on hardness.



\TheoremSatisfaction*

\begin{proof}
    Consider the following reduction from \textsc{Partition} to the \Psat{} problem. Let $S = \sum_i s_i$, and let the instance of the \Psat{} problem be $w = s \cup \{S+1, S+1\}$, and $m = 2$. We show that if a partition exists (i.e., $s$ can be partitioned into $s'$ and $s \setminus s'$ of equal sums), then the assignment $((s' \cup \{S+1\}), (s \setminus s' \cup \{S+1\}))$ satisfies $P$, and conversely, if the assignment $((s' \cup (S+1)), (s \setminus s' \cup (S+1)))$ satisfies $P$, then a partition exists.

    $\boldsymbol{P = Eq \land Cr}$
    
    ($\Rightarrow$) Clearly, if a partition exists, then the assignment places the same load $\sum_{i \in s'} s_i + (S+1)$ on both resources. This implies that all players incur the same cost, so that no player has a credible deviation, and all players of the same weight incur the same cost, so the assignment is $Eq \land Cr$-satisfying. ($\Leftarrow$) Consider two cases: if the two players of weight $S+1$ share the same resource, then the total load on the other resource cannot exceed $S$, so that both players of weight $S+1$ would prefer to be assigned to another resource, which fails $Cr$, so this cannot be the case. If the two players of weight $S+1$ are assigned to different resources, then they must incur the same cost, meaning that the other players are assigned evenly between the two resources. And thus a partition exists.

    $\boldsymbol{P = WM \land Cr}$ 
    
    ($\Rightarrow$) Clearly, if a partition exists, then the assignment places the same load $\sum_{i \in s'} s_i + (S+1)$ on both resources. This implies that all players incur the same cost, which is $WM \land Cr$-satisfying. ($\Leftarrow$) Consider two cases: If the two players of weight $S+1$ share the same resource, then by the same argument as above, $Cr$ fails, so this cannot be the case. If the two players of weight $S+1$ are assigned different resources, then the load on the two resources must be equal, otherwise $WM$ would not hold between any player $i \in S$ from the heavy-load resource and the $S+1$ player on the light-load resource. This means that players in $S$ are divided evenly between the two resources. And thus a partition exists.

    $\boldsymbol{P = WM \land Eq\land Cr}$ 
    
    ($\Rightarrow$) Same argument as above. ($\Leftarrow$) Consider two cases: If the two players of weight $S+1$ share the same resource, then by the same argument as above, $Cr$ fails, so this cannot be the case. If the two players of weight $S+1$ are assigned different resources, then the load on the two resources must be equal in order to satisfy $Eq$. This means that players in $S$ are divided evenly between the two resources. And thus a partition exists.

    In each of the above cases, the \Psat{} problem is also in NP: given a candidate solution, we can check the polynomial number of constraints imposed by these conditions, each of which takes polynomial time.
\end{proof}



\TheoremDecision*

\begin{proof}
    Consider the following reduction from \textsc{Partition} to a \Pmksp{} problem. Let $S = \sum_i s_i$, and let the instance to the \Pmksp{} problem be $w = s \cup (S+1, S+1)$, $m = 2$, with parameter $t = \frac{3}{2}S + 1$. We want to show that if a partition exists, then the assignment $((s' \cup \{S+1\}), (s \setminus s' \cup \{S+1\}))$ has makespan at most $t$ and satisfies $P$, and conversely, if the assignment $((s' \cup (S+1)), (s \setminus s' \cup (S+1)))$ has makespan $\leq t$ and satisfies $P$, then a partition exists.

    ($\Rightarrow$) If a partition exists, then the assignment places the same load $S/2 + (S+1)$ on both resources, so the makespan is at the decision threshold $t$. Moreover, all player incur the same cost, so it is easy to verify that $P \in \{\top, Eq, Cr, Eq \land Cr, WM, WM \land Eq, WM \land Cr, WM \land Eq \land Cr\}$ is satisfied. ($\Leftarrow$) If an assignment meets the threshold, then it must be the case that the two players of weight $S+1$ are assigned to different resources (otherwise the threshold would be breached), and that the players in $s$ are split evenly across the two resources, which means that a partition exists.

    Since each of the conditions above can be checked in polynomial time, each of the corresponding \Psat{} problems are also in P. We can thus summarize the above results as follows.
\end{proof}

\clearpage
\section{Sequential Contiguous Assignment and its subroutines} \label{sec:algorithm}

For completeness, we provide the SCA algorithm here together with all its subroutines.

\addtocounter{algorithm}{-1}
\begin{algorithm}
    \caption{Sequential Contiguous Assignment (SCA)}
    \begin{algorithmic}[1]
    \Statex \textbf{Input:} A weight vector $\ww = (w_1, w_2, \dots, w_n)$, a number of resources $m$, a property $P$, and a makespan threshold~$t$.
    \Statex \textbf{Output:} A contiguous $P$-assignment of makespan at most $t$, if it exists; \outpt{None} otherwise.

    \vspace{0.5em}
    \State Sort and index $\ww$ in non-increasing order of weights.

    \vspace{0.5em}
    \State $k \gets \textsc{First\_k}(P, t, \ww, m)$
    \State $a_j \gets 1$ for all $j \in [1:k]$ \Comment{Assign players to 1st resource}
    \State $x \gets 2$ \Comment{Define resource counter}
    \State $i \gets k+1$ \Comment{Define player counter}
    \State $t \gets W[1:k]$ \Comment{Update load threshold}
    
    \vspace{0.5em}
    \While{$x \leq m$ and $i \leq n$}
        \State Compute relevant input $z$
        \State $k \gets \textsc{Subsequent\_k}(P, t, 
        \ww[i:n], m-x+1, z)$
        \State $a_j \gets x$ for all $j \in [i,i+k]$ \Comment{Assign players to resource}
        \State $x \gets x+1$ \Comment{Update resource counter}
        \State $i \gets i+k$ \Comment{Update player counter}
        \State $t \gets W[i: i+k-1]$ \Comment{Update load threshold}
    \EndWhile

    \vspace{0.5em}
    \If{$i < n$} \Comment{Some players remain unassigned}
        \State \Return \outpt{None} 

    \vspace{0.5em}
    \Else \Comment{All players assigned to resources}
        \State $a \gets (a_1, ..., a_n)$
        \If{\textsc{Satisfies}$(\ba, P)$} 
        \Comment{Test $\ba$ for $P$}
            \State \Return $a = (a_1, ..., a_n)$
        \Else
            \State \Return \outpt{None}
        \EndIf
    \EndIf
    \end{algorithmic}
\end{algorithm}

\subsection{$\boldsymbol{P = WOE}$}~

\WOEFirstk*
\WOESubsequentk*
\begin{restatable}{algorithmic}{WOESatisfies}
    \Procedure{Satisfies}{$\ba, P = WOE$}
        \State \Return \outpt{True}
    \EndProcedure
\end{restatable}

\subsection{$\boldsymbol{P = WOE \land Eq}$}~

\WOEEqFirstk*
\WOEEqSubsequentk*

\begin{restatable}{algorithmic}{WOEEqSatisfies}
    \Procedure{Satisfies}{$\ba, P = WOE \land Eq$}
        \State \Return  \outpt{True}
    \EndProcedure
\end{restatable}

\subsection{$\boldsymbol{P = WOE \land Cr}$}~

\WOECrFirstk*
\WOECrSubsequentk*

\begin{restatable}{algorithmic}{WOECrSatisfies}
    \Procedure{Satisfies}{$\ba, P = WOE \land Cr$}
        \For {$i \in [n]$}
            \If {$v_{a_i} > v_m + w_i$} \Comment{Fails Cr}
                \State \Return  \outpt{None}
            \EndIf
        \EndFor

        \For {$x \in [m-1]$}
            \State Set $\overline w = \max \{ w_i : a_i = x\}$
            \State Set $\underline w = \min \{ w_i : a_i = x+1\}$
            \If {$v_{x+1} - \underline w > v_x - \overline w$} \Comment{Fails WOE}
                \State \Return  \outpt{None}
            \EndIf
        \EndFor

        \State \Return \outpt{True}
    \EndProcedure
\end{restatable}

\subsection{$\boldsymbol{P = WOE \land Eq \land Cr}$}~

\begin{restatable}{algorithmic}{WOEEqCrFirstk}
    \Procedure{First\_k}{$P = WOE \land Eq \land Cr, t, \ww, m$}
        \If{ $\exists k : W[1:k-1] \leq \frac{W - W[1:k]}{m-1} \leq W[1:k] \leq t\}$}
            \State \Return  $k$    
        \Else{}
            \State \Return  \outpt{None}
        \EndIf
    \EndProcedure
\end{restatable}

\begin{restatable}{algorithmic}{WOEEqCrSubsequentk}
    \Procedure{Subsequent\_k}{$P = WOE \land Eq \land Cr, t, \ww, m, z$}
    \If { $m = 1$} \Comment{Final resource} 
        \State \Return  $k = |\ww|$
    \EndIf
    \If { $\exists k : W[1:k-1] \leq \frac{W - W[1:k]}{m-1} \leq W[1:k] \leq t \}$}
        \State \Return  $k$
    \Else{}
        \State \Return  \outpt{None}
    \EndIf
    \EndProcedure
\end{restatable}

\begin{restatable}{algorithmic}{WOEEqCrSatisfies}
    \Procedure{Satisfies}{$\ba, P = WOE \land Eq \land Cr$}
        \For {$i \in [n]$}
            \If {$v_{a_i} > v_m + w_i$} \Comment{Fails $Cr$}
                \State \Return  \outpt{None}
            \EndIf
        \EndFor

        \For {$i \in [n-1]$}
            \If {$w_i = w_{i+1} \land v_{a_i} \neq v_{a_{i+1}}$} \Comment{Fails $Eq$}
                \State \Return  \outpt{None}
            \EndIf
        \EndFor

        \For {$x \in [m-1]$}
            \State Set $\overline w = \max \{ w_i : a_i = x\}$
            \State Set $\underline w = \min \{ w_i : a_i = x+1\}$
            \If {$v_{x+1} - \underline w > v_x - \overline w$} \Comment{Fails WOE}
                \State \Return  \outpt{None}
            \EndIf
        \EndFor
        
        \State \Return  \outpt{True}
    \EndProcedure
\end{restatable}


\subsection{$\boldsymbol{P = SM}$}~

\SMFirstk*
\SMSubsequentk*

\begin{restatable}{algorithmic}{SMSatisfies}
    \Procedure{Satisfies}{$\ba, P = SM$}
        \State \Return \outpt{True}
    \EndProcedure
\end{restatable}

\subsection{$\boldsymbol{P = SM \land Eq}$}~

\SMEqFirstk*
\SMEqSubsequentk*

\begin{restatable}{algorithmic}{SMEqSatisfies}
    \Procedure{Satisfies}{$\ba, P = SM \land Eq$}
        \State \Return  \outpt{True}
    \EndProcedure
\end{restatable}

\subsection{$\boldsymbol{P = SM \land Cr}$}~

\SMCrFirstk*
\SMCrSubsequentk*
\SMCrSatisfies*

\subsection{$\boldsymbol{P = SM \land Eq \land Cr}$}~

\begin{restatable}{algorithmic}{SMEqCrFirstk}
    \Procedure{First\_k}{$P = SM \land Eq \land Cr, t, \ww, m$}
        \If{ $\exists k : (k-1) w_1 \leq \frac{W - k w_1}{m-1} \leq k w_1 \leq t\}$}
            \If {$ k \leq \#(w_1)$}
                \State \Return $k$      
            \Else{}
                \State \Return  \outpt{None}
            \EndIf
        \Else{}
            \State \Return  \outpt{None}
        \EndIf
    \EndProcedure
\end{restatable}

\begin{restatable}{algorithmic}{SMEqCrSubsequentk}
    \Procedure{Subsequent\_k}{$P = SM \land Eq \land Cr, t, \ww, m, z$}
    \State Let $z$ be the weight of players in previous resource
    \If { $z = w_1$} \Comment{Load must be $\leq t$}
        \If { $m = 1$} \Comment{Final resource}
            \If { $w_1 = w_{|\ww|}$}
                \State \Return  $k = |\ww|$
            \Else{}
                \State \Return  \outpt{None}
            \EndIf
        \EndIf
        \If { $\exists k : (k-1) w_i \leq \frac{W - k w_i}{m-1} \leq k w_i \leq t \land k \leq \#(w_1)$}
            \State \Return $k$      
        \Else{}
            \State \Return  \outpt{None}
        \EndIf
    \Else {} \Comment{Load must be $< t$}
        \If { $m = 1$} \Comment{Final resource}
            \If { $w_1 = w_{|\ww|}$ and $W[1:|\ww|] < t$}
                \State \Return $|\ww|$
            \Else{}
                \State \Return  \outpt{None}
            \EndIf
        \EndIf
        \If { $\exists k : (k-1) w_i \leq \frac{W - k w_i}{m-1} \leq k w_i < t \land k \leq \#(w_1)$}
            \State \Return  $k$      
        \Else{}
            \State \Return  \outpt{None}
        \EndIf
    \EndIf
    \EndProcedure
\end{restatable}

\begin{restatable}{algorithmic}{SMEqCrSatisfies}
    \Procedure{Satisfies}{$\ba, P = SM \land Eq \land Cr$}
        \For {$i \in [n]$}
            \If {$v_{a_i} > v_m + w_i$} \Comment{Fails Cr}
                \State \Return  \outpt{None}
            \EndIf
        \EndFor

        \For {$i \in [n-1]$}
            \If {$w_i = w_{i+1} \land v_{a_i} \neq v_{a_{i+1}}$} \Comment{Fails $Eq$}
                \State \Return  \outpt{None}
            \EndIf
        \EndFor
        
        \State \Return  \outpt{True}
    \EndProcedure
\end{restatable}


\clearpage
\section{Envy-Freeness} \label{sec:EF}

\subsection{\Pmksp[\boldsymbol{EF}]}

\cite{Lee2025PartitionEquilibria} showed that $EF$ is a strong condition that can only be satisfied if and only if there exists some value $b \geq 0$ where for all $w_i \in \supp(\ww)$,

\begin{align*}
    w_i & \mid b \tag{common multiple} \\
    \frac{b}{w_i} + 1 & \mid \#(w_i) \tag{balanced loads} \\
    \sum_{w_i \in \supp(\ww)} \frac{\#(w_i)}{b/w_i + 1} &\leq m \tag{resource constraint}
\end{align*}


Intuitively, an $EF$-assignment, for some feasible value $b$, is an assignment where each resource is assigned $b/w_i + 1$ players of the same weight $w_i$, and thus a total of $\frac{\#(w_i)}{b/w_i + 1}$ resources are used by players of weight $w_i$, and each resource has the load $b + w_i$.

\Cref{alg:EF} provides a search for feasible values of $b$ and outputs their corresponding assignment.


\begin{algorithm}
    \caption{\Psat[EF]}
    \label{alg:EF}
    \begin{algorithmic}[1]
    \Statex \textbf{Input:} A multset of $n$ weights $w = (w_1, w_2, \dots, w_n)$, an integer number of resources $m$, and a makespan threshold $t$.
    \Statex \textbf{Output:} All $EF$-assignment $\ba$ of makespan at most $t$, if it exists; \outpt{None} otherwise.

    \vspace{0.5em}
    \State Compute $\supp(\ww)$, and $\#(w_i)$ for each $w_i \in \supp(\ww)$.
    \State Initialize the candidate set $b_{cand} = \emptyset$.
    \State Compute $D$, the set of all divisors of $\#(w_1)$ 
    \For{$d \in D$} 
        \State Set $b_{cand} = b_{cand} \cup \{(d-1)w_1\}$.
    \EndFor

    \vspace{0.5em}
    \For{$b \in b_{cand}$} 
        \If{$\sum_{w_i \in \supp(\ww)} \frac{\#(w_i)}{b/w_i + 1} > m$} 
            \State Set $b_{cand} = b_{cand} \setminus \{b\}$ 
            \State \textbf{Continue}
        \EndIf
        \For{$w_i \in \supp(\ww)$}
            \If{$w_i \not \mid b$ or $(b/w_i + 1) \not \mid \#(w_i)$} 
                \State Set $b_{cand} = b_{cand} \setminus \{b\}$
            \EndIf
        \EndFor
    \EndFor

    \vspace{0.5em}
    \State Initialize the set of satisfying assignments $a_{set} = \emptyset$
    \For{$b \in b_{cand}$}
        \For{$w_i \in \supp(\ww)$}
            \State Assign $b/w_i + 1$ players to each resource, for a total of $\frac{\#(w_i)}{b/w_i +1}$ resources.
        \EndFor
        \State Add assignment associated with $b$ to $a_{set}$
    \EndFor
    \State \Return $a_{set}$
    \end{algorithmic}
\end{algorithm}

The algorithm first generates a candidate set of values for $b$, based on the \emph{balanced loads} condition for some weight $w_1$ (lines 4-5). 
For each candidate, it then checks the resource constraint condition (lines 7-8), and for each distinct weight in $\supp(\ww)$, the balanced loads and common multiple conditions (lines 10-11), and remove any that fails. This results in a set of feasible values of $b$, and a set of unique assignments, one for each $b$ value, is returned (lines 13-18).

Because the above algorithm returns all $EF$-assignments, we can simply look at the makespan of the optimal $EF$-assignment (i.e., the makespan of the assignment associated with the smallest feasible $b$, which is $b + w_1$ where $w_1$ is the largest weight) to answer the \Pmksp[EF] problem for a given value of $t$.

Since $\#(w_1) \leq n$, finding all divisors of $\#(w_1)$ is in $O(\sqrt n)$. Because also $\supp(\ww) \leq n$, the loops through all divisors of $w_1$ and $\supp(\ww)$ (lines 6-12, 14-17) takes $O(n \sqrt n)$, which is the running time of this algorithm.

\subsection{\Pmksp[\boldsymbol{EF\land Cr}]}

The \Pmksp[EF\land Cr] problem was solved in \cite{Lee2025PartitionEquilibria}, and requires a simple modification to \Cref{alg:EF}. 
First notice that by construction, an $EF$-Satisfying assignment associated with some value $b$ satisfies the condition $v_{a_i} - w_i = b < v_{a_j}$ for all $i, j \in [n]$. This means no player can benefit from switching to a resource that is already used. So it remains the check that there are no unused resources for a player to switch to for $Cr$ to be satisfied. 

We do this by changing the \emph{resource constraint} condition to an equality, i.e. $\sum_i \frac{\#(w_i)}{b/w_i + 1} = m$, which corresponds to changing the pre-condition in line 7 of the algorithm to $\sum_i \frac{\#(w_i)}{b/w_i + 1} \neq m$. Because at most one value of $b$ can satisfy this condition, the $EF \land Cr$-Satisfying problem is unique if it exists.

\begin{example}
    Consider the instance $w = (3, 3, 3, 2, 2, 2, 2), m = 3$, where $\supp(\ww) = (3, 2), \#(3) = 3, \#(2) = 4$. The only feasible value for $b$ is $6$, with the corresponding $EF$-assignment where $l(\ba) =\{\{3,3,3\},\{2,2,2,2\},\emptyset\}$, with a makespan of $9$. Therefore the answer to \Pmksp[EF] is $\ba$ if $t \geq 9$, and \outpt{None} otherwise. 
    Because there is an unused resource in $\ba$, $\ba$ is not $EF \land Cr$-satisfying. 
\end{example}

\end{document}